\newcommand{\R}{\mathbb{R}}
\newcommand{\be}{\begin{equation}}
\newcommand{\ee}{\end{equation}}
\newcommand{\bea}{\begin{eqnarray}}
\newcommand{\eea}{\end{eqnarray}}
\newcommand{\E}[1]{\mathrm{E}\left\{{#1} \right\}}
\newcommand{\Ez}[1]{\mathrm{E}_z\left\{{#1} \right\}}
\newcommand{\mat}[1]{\mathbf{{#1}}}
\newtheorem{theorem}{Theorem}[section]
\newtheorem{lemma}{Lemma}[section]
\newtheorem{proposition}{Proposition}[section]
\newtheorem{remark}{Remark}[section]
\def\1{{\rm 1}}
\def\s1{^{\rm (1)}}
\newcommand{\V}{\mathbb{V}}
\renewcommand{\vec}[1]{{\mathchoice
                     {\mbox{\boldmath$\displaystyle{#1}$}}
                     {\mbox{\boldmath$\textstyle{#1}$}}
                     {\mbox{\boldmath$\scriptstyle{#1}$}}
                     {\mbox{\boldmath$\scriptscriptstyle{#1}$}}}}
\newcommand{\ip}[2]{\left\langle {#1}, {#2} \right\rangle}
\def\1{{\rm 1}}
\def\s1{^{\rm (1)}}
\newcommand{\Np}{{N_\text{p}}}
\def\longrightharpoonup{\relbar\joinrel\rightharpoonup}
\def\longleftharpoondown{\leftharpoondown\joinrel\relbar}
\def\longrightleftharpoons{
  \mathop{
    \vcenter{
      \hbox{
      \ooalign{
        \raise1pt\hbox{$\longrightharpoonup\joinrel$}\crcr
          \lower1pt\hbox{$\longleftharpoondown\joinrel$}
          }
      }
    }
  }
}
\tikzstyle{decision} = [diamond, minimum width=2.5cm, minimum height=1cm, text centered, draw=black]
\tikzstyle{startstop} = [rectangle, draw, 
 \tikzstyle{process} = [rectangle, draw, 
\tikzstyle{line} = [draw, -latex']
\tikzstyle{cloud} = [draw, ellipse, node distance=3cm,
   \tikzstyle{io} = [trapezium, trapezium left angle=70, trapezium right angle=110, minimum width=2cm, text width=9.5em,minimum height=1cm, text centered, draw, inner sep=0pt]
\newenvironment{breakablealgorithm}
  {% \begin{breakablealgorithm}
   \begin{center}
     \refstepcounter{algorithm}% New algorithm
     \hrule height.8pt depth0pt \kern2pt% \@fs@pre for \@fs@ruled
     \renewcommand{\caption}[2][\relax]{% Make a new \caption
       {\raggedright\textbf{\ALG@name~\thealgorithm} ##2\par}%
       \ifx\relax##1\relax % #1 is \relax
         \addcontentsline{loa}{algorithm}{\protect\numberline{\thealgorithm}##2}%
       \else % #1 is not \relax
         \addcontentsline{loa}{algorithm}{\protect\numberline{\thealgorithm}##1}%
       \fi
       \kern2pt\hrule\kern2pt
     }
  }{% \end{breakablealgorithm}
     \kern2pt\hrule\relax% \@fs@post for \@fs@ruled
   \end{center}
  }
\begin{document}

\thispagestyle{empty}
\begin{center}
%\textsc{Efficient UQ for chemical kinetics using active subspaces}
\textsc{Active subspace-based dimension reduction for chemical kinetics applications with epistemic uncertainty}

\bigskip 
\bigskip 

Manav Vohra$^{1}$, Alen Alexanderian$^{2}$, Hayley Guy$^{2}$, Sankaran Mahadevan$^{1}$

\bigskip
\bigskip

\normalsize
$^1$Department of Civil and Environmental Engineering\\
Vanderbilt University\\
Nashville, TN 37235\\

\bigskip

$^2$Department of Mathematics\\
North Carolina State University\\
Raleigh, NC 27695\\

%\bigskip
%
\end{center}
%
%\vspace{6cm}
%
%\begin{tabbing}
%Corresponding Author: \hspace{5mm} \= Manav Vohra\\
%       \>  Department of Civil and Environmental Engineering\\
%       \>  Vanderbilt University\\
%       \>  279 Jacobs Hall, VU Mailbox: PMB 351831 \\
%       \>  Nashville, TN 37235 \\
%       \> \\
%Phone: \> (615) 322-3040 \\
%Fax:   \> (615) 343-3773 \\
%Email: \>  manav.vohra@vanderbilt.edu   \\
%\\
%Submitted to: \> \textit{Combustion and Flame} \\
%\>  September 2018\\
%
%\bigskip
%\end{tabbing}

%\clearpage

%\baselineskip=22pt

%\tableofcontents

\section*{Abstract}
We focus on an efficient approach for quantification of uncertainty in complex chemical
reaction networks with a large number of uncertain parameters and input conditions.  Parameter
dimension reduction is accomplished by computing an~\emph{active subspace} that
predominantly captures the variability in the quantity of interest (QoI).  In
the present work, we compute the active subspace for a H$_2$/O$_2$ mechanism 
that involves 19 chemical reactions, using an efficient iterative strategy.  The
active subspace is first computed for a 19-parameter problem wherein only the
uncertainty in the pre-exponents of the individual reaction rates is
considered. This is followed by the analysis of a 36-dimensional case wherein
the activation energies and initial conditions are also considered uncertain.  In both cases, a
1-dimensional active subspace is observed to capture the uncertainty in the QoI,
 which indicates enormous
potential for efficient statistical analysis of complex chemical systems. In
addition, we explore links between active subspaces and global sensitivity
analysis, and exploit these links for identification of key
contributors to the variability in the model response.
\bigskip

\noindent \textbf{Keywords}: Chemical kinetics; epistemic uncertainty; active subspace; dimension reduction; surrogate

\clearpage

\section{Introduction}
\label{sec:intro}

%Constributions:

Time evolution of a chemically reacting system is largely dependent upon rate
constants associated with individual reactions. The rate constants are
typically assumed to exhibit a certain correlation with temperature (e.g.,
Arrhenius-type). Hence, accurate specification of the rate-controlling
parameters is critical to the fidelity of simulations. However, in practical
applications, these parameters are either specified using expert knowledge or
estimated based on a regression fit to a set of sparse and noisy
data~\cite{Burnham:1987, Burnham:1988, Vohra:2011, Sarathy:2012}.
%Consequently, the underlying uncertainty associated with the parameters and
%therefore the simulation is typically ignored. 
Intensive research efforts in recent years within the field of uncertainty quantification (UQ)
address the quantification and propagation of uncertainty in system models due to 
inadequate data, parametric uncertainty, and model errors~\cite{Vohra:2014, 
Vohra:2017, Morrison:2018, Hantouche:2018, Nannapaneni:2016, Sankararaman:2015,
Reagana:2003}. 

In complex mechanisms involving a large number of reactions, characterizing the
propagation of uncertainty from a large set of inputs to the model output is
challenging due to the associated computational effort.  A major focus of this
article is the implementation of a robust framework that aims to identify
\emph{important} directions in the input space that predominantly capture the
variability in the model output. These directions, which constitute the so called
\emph{active subspace}~\cite{Constantine:2015}, are the dominant eigenvectors
of a matrix derived from the gradient information of the model output with
respect to an input. The active subspace methodology thus focuses on reducing
the dimensionality of the problem, and hence the computational effort
associated with uncertainty propagation. The focus here is on input parameter
dimension reduction. This is different from 
techniques such at Computational
Singular Perturbation (CSP)~\cite{Lam85,
LamGoussis89,ValoraniCretaGoussisEtAl06,
DebusschereMarzoukNajmEtAl12,
SalloumAlexanderianLeMaitreEtAl12} that aim to reduce the
complexity of stiff chemical systems by filtering out the 
fast timescales from the system. The latter is done, for instance, using
the eigenvectors of the system Jacobian 
to decouple the fast and slow processes; see e.g.,~\cite{DebusschereMarzoukNajmEtAl12}.

The application problem considered in this work is the
H$_2$/O$_2$ reaction mechanism from~\cite{Yetter:1991}. This mechanism has gained
a lot of attention as a potential source of clean energy for
locomotive applications~\cite{Das:1996}, and more recently in fuel
 cells~\cite{Loges:2008,Cosnier:2016}. 
The mechanism involves 19
reactions including chain reactions, dissociation/recombination reactions, and
formation and consumption of intermediate species; see Table~\ref{tab:kinetics}. 
For each reaction, the reaction rate is assumed to follow an Arrhenius
correlation with temperature:
\be
k_i(T) = A_iT^{n_i}\exp(-E_{a,i}/RT), 
\label{eq:rate}
\ee
where $A_i$ is the pre-exponent, $n_i$ is the temperature
exponent, $E_{a,i}$ is the
activation energy corresponding to the $i^{th}$ reaction, and $R$ is the
universal gas constant.  The Arrhenius rate law in~\eqref{eq:rate} 
is often interpreted in a logarithmic form as follows:
\be
\log(k_i) = \log(A_i) + n_i\log(T) - E_{a,i}/RT. 
\label{eq:ratelog}
\ee
\begin{table}[htbp]
\renewcommand*{\arraystretch}{0.9}
\begin{center}
\begin{tabular}{llll}
\toprule
Reaction \#     & Reaction &&\\
\bottomrule
$\mathcal{R}_1$ & H + O$_2$          & $\rightleftharpoons$ & O + OH \\
$\mathcal{R}_2$ & O + H$_2$          & $\rightleftharpoons$ & H + OH \\
$\mathcal{R}_3$ & H$_2$ + OH         & $\rightleftharpoons$ & H$_2$O + H \\
$\mathcal{R}_4$ & OH + OH            & $\rightleftharpoons$ & O + H$_2$O \\
$\mathcal{R}_5$ & H$_2$ + M          & $\rightleftharpoons$ & H + H + M \\
$\mathcal{R}_6$ & O + O + M          & $\rightleftharpoons$ & O$_2$ + M \\
$\mathcal{R}_7$ & O + H + M          & $\rightleftharpoons$ & OH + M \\
$\mathcal{R}_8$ & H + OH +M          & $\rightleftharpoons$ & H$_2$O + M \\
$\mathcal{R}_9$ & H + O$_2$ + M      & $\rightleftharpoons$ & HO$_2$ + M \\
$\mathcal{R}_{10}$ & HO$_2$ + H      & $\rightleftharpoons$ & H$_2$ + O$_2$ \\
$\mathcal{R}_{11}$ & HO$_2$ + H      & $\rightleftharpoons$ & OH + OH \\
$\mathcal{R}_{12}$ & HO$_2$ + O      & $\rightleftharpoons$ & O$_2$ + OH \\
$\mathcal{R}_{13}$ & HO$_2$ + OH     & $\rightleftharpoons$ & H$_2$O + O$_2$ \\
$\mathcal{R}_{14}$ & HO$_2$ + HO$_2$ & $\rightleftharpoons$ & H$_2$O$_2$ + O$_2$ \\
$\mathcal{R}_{15}$ & H$_2$O$_2$ + M  & $\rightleftharpoons$ & OH + OH + M \\
$\mathcal{R}_{16}$ & H$_2$O$_2$ + H  & $\rightleftharpoons$ & H$_2$O + OH \\
$\mathcal{R}_{17}$ & H$_2$O$_2$ + H  & $\rightleftharpoons$ & HO$_2$ + H$_2$ \\
$\mathcal{R}_{18}$ & H$_2$O$_2$ + O  & $\rightleftharpoons$ & OH + HO$_2$ \\
$\mathcal{R}_{19}$ & H$_2$O$_2$ + OH & $\rightleftharpoons$ & HO$_2$ + H$_2$O \\
\bottomrule
\end{tabular}
\end{center}
\caption{Reaction mechanism for H$_2$/O$_2$ from~\cite{Yetter:1991}}.
\label{tab:kinetics}
\end{table}
The global reaction associated with the H$_2$/O$_2$ mechanism can
be considered as follows:
\be
2\text{H}_2 + \text{O}_2 \rightarrow 2\text{H}_2\text{O}.
\label{eq:global}
\ee 
The equivalence ratio ($\Phi$) is given as follows:
\be
\Phi = \frac{(M_{\text{H}_2}/M_{\text{O}_2})_\text{obs}}{(M_{\text{H}_2}/M_{\text{O}_2})_\text{st}},
\label{eq:phi}
\ee
where the numerator on the right-hand-side denotes the ratio of the fuel (H$_2$)
and oxidizer (O$_2$) at a given condition to the same quantity under stoichiometric
conditions. In this study, computations were performed at fuel-rich conditions,
$\Phi$~=~2.0. Homogeneous ignition at constant pressure is simulated using the
TChem software package~\cite{Safta:2011} using an initial pressure, $P_0$~=~1~atm and
initial temperature, $T_0$~=~900~K. The time required for the rate of
temperature increase to exceed a given threshold, regarded as \emph{ignition delay}
is recorded. 

We seek to understand the impact of uncertainty in the
rate-controlling parameters, pre-exponents~($A_i$'s) 
and the activation energies~($E_{a,i}$'s) 
as well as the initial pressure, temperature, and the
equivalence ratio on the ignition delay. The $\log(A_i)$'s associated with all
reactions and the $E_{a,i}$'s with non-zero nominal estimates
are considered to be uniformly distributed about their nominal estimates provided
in~\cite{Yetter:1991}. Temperature exponent, $n_i$ for each reaction is fixed to
its nominal value, also provided in~\cite{Yetter:1991}.
The initial conditions are also considered to be uniformly
distributed about their respective aforementioned values. 
The total number of uncertain inputs is 36 which makes
the present problem computationally challenging due to the large number of 
uncertain parameters in addition to the initial conditions.  
To address this challenge, we focus on reducing the dimensionality
of the problem by computing the active subspace.
This involves repeated evaluations of the gradient of a model output with
respect to the input parameters. Several numerical techniques are available
for computing the gradient, such as 
finite differences and more advanced methods involving
adjoints~\cite{Jameson:1988,Borzi:2011,Alexanderian:2017}. The
adjoint-based method requires a solution of the state equation (forward solve)
and the corresponding adjoint equation. Hence, it is 
limited by the availability of an adjoint solver. Additional model evaluations
at neighboring points are required if finite difference is used which increases
the computational effort. 
Regression-based techniques, which can be suitable for active subspace computations,
on the other hand, aim to estimate the gradient by approximating the model
output using a regression fit.  
These are computationally less intensive than
the former. However, as expected, there is a trade-off between computational
effort and accuracy in the two approaches for estimating the gradient. 

In this work, we adopt an iterative strategy to reduce the computational effort
associated with active subspace computation. Moreover, we explore two
approaches for estimating the gradient of the ignition delay with respect to
the uncertain rate-controlling parameters: pre-exponents ($A_i$'s), 
the activation energies ($E_{a,i}$'s),
as well as the initial conditions: $P_0$, $T_0$, and $\Phi_0$.
Note that the equivalence ratio corresponding to the initial molar ratios of
$H_2$ and O$_2$ is denoted as $\Phi_0$. The first approach uses finite
differences to estimate the gradient and will be referred to as the
\emph{perturbation} approach throughout the article.  The second approach
is adapted from~\cite[Algorithm 1.2]{Constantine:2015} and involves
repeated regression-fits to a subset of available model evaluations, and is
regarded as the regression approach in this work.

An alternate strategy to dimension reduction involves computing the global
sensitivity measures associated with the uncertain inputs of a model. Depending
upon the estimates of the sensitivity measures, only the important inputs are
varied for the purpose of uncertainty quantification (UQ). Sobol' indices are
commonly used as global sensitivity measures~\cite{Sobol:2001}. They are
used to quantify the relative contributions of the uncertain inputs to the variance
in the output, either individually, or in combination with other inputs. 
Multiple efforts have focused on efficient computation of the Sobol' 
indices~\cite{Sudret:2008,Plischke:2013,Tissot:2015,Li:2016} including the 
derivative-based global sensitivity measures~(DGSMs), developed to
compute approximate upper bounds for the Sobol' indices with much fewer
computations~\cite{Sobol:2009, Lamboni:2013}. It was noted
in~\cite{Diaz:2016,Constantine:2017} that DGSMs can be approximated by
exploiting their links with active subspaces. This led to the definition of the 
so-called \emph{activity scores}. In Section~\ref{sub:gsa}, we build on these
ideas to provide a complete analysis of links between Sobol indices, DGSMs, and
activity scores for functions of independent random inputs whose distribution
law belongs to a broad class of probability measures. 
It is worth mentioning that computing global sensitivity measures provides 
important information about a model that go beyond dimension reduction. By 
identifying parameters with significant impact on the model output, we can assess
regimes of validity of the model formulation, and gain critical insight into the
underlying physics in many cases. 

The main contributions of this paper are as follows: 
\begin{itemize}
\item 
Active subspace
discovery in a high-dimensional H$_2$/O$_2$ kinetics problem involving 36
uncertain inputs: The methodology presented in this work
successfully demonstrated that a 1-dimensional active subspace can reasonably 
approximate the uncertainty in the QoI, indicating immense potential
for computational savings. The presented
analysis can also guide practitioners in other problems of chemical kinetics on using the
method of active subspaces to achieve efficiency in uncertainty propagation.  
\item Comprehensive numerical investigation of the perturbation and the regression approaches: 
We investigate the suitability of both approaches
for estimating the gradient of ignition delay in the H$_2$/O$_2$ mechanism.
Specifically, we compare resulting
active subspaces, surrogate models, and the ability to approximate global
sensitivity measures through a comprehensive set of numerical experiments. Our
results reveal insight into the merits of the methods as well as
their shortcomings.  
\item Analysis of the links between  
global sensitivity measures: 
By connecting the recent theoretical advances in variance-based and
derivative-based global sensitivity analysis to active subspaces, we provide a complete analysis
of the links between total Sobol' indices, DGSMs, and activity scores for a broad
class of probability distributions. Our analysis is concluded by a result quantifying
approximation errors incurred due to fixing unimportant parameters, deemed so by 
computing their activity scores.  

\end{itemize}

This article is organized as follows. In section~\ref{sub:ac}, a brief
theoretical background on the active subspace methodology is provided. In
section~\ref{sub:gsa}, it is shown that the activity scores provide a
reasonable approximation to the DGSMs especially in a high-dimensional setting.
Additionally, a relationship between the three global sensitivity measures,
namely, the activity scores, DGSMs, and the total Sobol' indices is
established. In section~\ref{sec:method}, a systematic framework for computing
the active subspace is provided. 
Numerical results based on the perturbation approach are compared with those
obtained using the regression approach.  The active 
subspace is initially computed for a 19-dimensional 
H$_2$/O$_2$ reaction kinetics problem wherein only the $A_i$'s are
considered as uncertain. We further compute the active subspace
for a 36-dimensional H$_2$/O$_2$ reaction kinetics problem in section~\ref{sec:app}.
For both settings, the convergence characteristics and the predictive accuracy
of the two approaches is compared for a given amount of computational effort. The two
approaches are observed to yield consistent results, and a 1-dimensional active subspace
is observed to capture the uncertainty in the ignition delay.
Finally, a summary and discussion based on our findings is included in
section~\ref{sec:conc}.

\bigskip
\bigskip

\newcommand{\act}[3]{\nu_{{#2},{#3}}({#1})}
\newcommand{\actt}[3]{\tilde{\nu}_{{#2},{#3}}({#1})}
\newcommand{\redf}[2]{f^{({#1})}(\vec\xi; {#2})}
\section{Active subspaces}
\label{sub:ac}

Herein, we use a random vector 
$\vec\xi \in \Omega\in\mathbb{R}^{N_p}$ to parameterize model uncertainties, 
where $N_p$ is the number of uncertain inputs.
In practical computations, the \emph{canonical} variables $\xi_i, i=1,\ldots ,N_p$, are  
mapped to physical ranges meaningful in a given mathematical model. 
As mentioned in the introduction, an active subspace is a low-dimensional subspace
that consists of important directions in a model's input
parameter space~\cite{Constantine:2015}. The effective variability in a model output $f$
due to uncertain inputs is predominantly captured
along these directions. 
The directions constituting the active subspace are the dominant eigenvectors of the positive
semidefinite matrix 
\be
\mat{C} = \int_\Omega (\nabla_{\vec{\xi}}f)(\nabla_{\vec{\xi}}f)^\top \mu(d\vec\xi), 
\label{eq:C}
\ee
with 
$\mu(d\vec{\xi}) = \pi(\vec{\xi})d\vec{\xi}$, where $\pi(\vec{\xi})$ is the joint probability
density function of $\vec{\xi}$. Herein, $f$ is assumed to be a square integrable 
function with continuous partial 
derivatives with respect to the input parameters; moreover, we assume the partial derivatives
are square integrable. 
%Hence, it is possible that a given $f(\vec{\xi})$ might not admit an active
%subspace. However, it is of remarkable interest to investigate if one exists
%since the subspace could be exploited to reduce the dimensionality of the
%problem and hence the associated computational effort.
%Here,
%$\nabla_{\vec{\xi}}f$ denotes the gradient vector. 
%with individual components
%being partial derivatives of $f$ with respect to the $i^\text{th}$ input, $\xi_i$. 
Since $\mat{C}$ is symmetric and
positive semidefinite, it admits a spectral decomposition:
\be
\mat{C} = \mat{W}\mat{\Lambda}\mat{W}^\top.
\ee
Here $\mat{\Lambda}$ = diag($\lambda_1,\ldots,\lambda_{N_p}$) with the eigenvalues
$\lambda_i$'s sorted in descending order
\[
     \lambda_1 \geq \lambda_2 \geq \cdots \geq \lambda_\Np \geq 0,
\] 
and $\mat{W}$ has the (orthonormal) eigenvectors $\vec{w}_1, \ldots, \vec{w}_\Np$ as its columns.
The eigenpairs are partitioned about the $r$th eigenvalue such that
$\lambda_r/\lambda_{r+1}\gg 1$, 
\be
 \mat{W} = [\mat{W}_1~\mat{W}_2],~~\mat{\Lambda} = \begin{bmatrix}\mat{\Lambda}_1 & \\  &
  \mat{\Lambda}_2. 
\end{bmatrix}
\ee
The columns of $\mat{W}_1 = 
\begin{bmatrix} \vec{w}_1 \cdots \vec{w}_r\end{bmatrix}$ 
span the dominant eigenspace of $\mat{C}$ and
define the active subspace, and $\mat{\Lambda}_1$ is a diagonal matrix with
the corresponding set of eigenvalues, $\lambda_1, \ldots, \lambda_r$, on its diagonal. 
Once the active subspace
is computed, dimension reduction is accomplished by transforming the parameter
vector $\vec\xi$ into 
$\vec{y} = \mat{W}_1^\top\vec{\xi} \in \R^r$. The elements of $\vec{y}$ are 
referred to as the set of active variables. 
%The number $r$ of active
%variables is equal to the number of dominant eigenvectors. 

Consider the function
\[
    G(\vec{y}) = f(\mat{W}_1\vec{y}), \quad \vec{y} \in \R^r.
\]
Following~\cite{Constantine:2015}, we use the approximation 
\[
f(\vec{\xi}) \approx f(\mat{W}_1 \mat{W}_1^T \vec\xi) =  
G(\mat{W}_1^\top \vec{\xi}).
\] 
That is, the model output $f(\vec\xi)$, in the original parameter space,
is approximated by $G(\mat{W}_1^\top \vec{\xi})$ in the active subspace.
%Hence, $f$ is essentially evaluated at the projection of $\vec{\xi}$ on the
%column space of $\mat{W}_1$. 
We could confine uncertainty analysis to the inputs in the
active subspace whose dimension is typically much smaller (in applications that
admit such a subspace) than the dimension of the original input parameter. To further
expedite uncertainty analysis, one could fit a regression surface to $G$ using the 
following sequence of steps, as outlined in~\cite[chapter 4]{Constantine:2015}. 
\begin{enumerate}
\item Consider a given set of $N$ 
data points, $\big(\vec{\xi}_i, f(\vec{\xi}_i)\big)$, $i = 1, \ldots, N$. 
\item For each $\vec{\xi}_i$, compute $\vec{y}_i = \mat{W}_1^\top\vec{\xi}_i$. Note that
 $G(\vec{y}_i)$ $\approx$ $f(\vec{\xi}_i)$.
\item Use data points $\big(\vec{y}_i, f(\vec\xi_i)\big)$, $i = 1, \ldots, N$, to compute a 
regression surface $\hat{G}(\vec{y})\approx 
G(\vec{y})$.
\item Overall approximation, $f(\vec{\xi})$ $\approx$ $\hat{G}(\mat{W}_1^\top\vec{\xi})$.
\end{enumerate}

In practice, the matrix $\mat{C}$ defined in~\eqref{eq:C} is 
approximated using pseudo-random sampling techniques such as Monte Carlo or
Latin hypercube sampling (used in this work):
%The integral in~\eqref{eq:C} is replaced with a
%summation as follows:
 %
 \be
 \mat{C}\approx \hat{\mat{C}} = \frac{1}{N}\sum\limits_{i=1}^{N} 
 (\nabla_{\vec{\xi}}f(\vec{\xi}_i))(\nabla_{\vec{\xi}}f(\vec{\xi}_i))^\top
 = \hat{\mat{W}}\hat{\mat{\Lambda}}\hat{\mat{W}}^\top
\label{eq:chat}
 \ee
Clearly the computational effort associated with constructing the matrix
$\hat{\mat{C}}$ scales with the number of samples, $N$. Hence, an iterative
computational approach is adopted in this work to gradually increase  
$N$ until the dominant eigenpairs are approximated
with sufficient accuracy; see Section~\ref{sec:method}. 

%%It has been shown that the accuracy of approximated dominant eigenspace,
%%$\hat{\vec{w}}_1$ is inversely proportional to the difference between the
%%smallest eigenvalue in $\hat{\vec{\Lambda}}_1$ and the largest eigenvalue in
%%$\hat{\vec{\Lambda}}_2$~\cite{Constantine:2014}. 
%%Components of the eigenvectors in the active subspace could be used for
%%estimating the so-called activity scores as a measure for global sensitivity
%%and also be used for approximating the DGSMs as discussed in the following
%%section.

\section{GSA measures and their links with active subspaces}
\label{sub:gsa}
Consider a function $f = f(\xi_1, \xi_2, \ldots, \xi_\Np)$. 
While the active subspace framework described above does not make any assumptions
about independence of the inputs $\xi_i$, $i = 1, \ldots, \Np$, 
the classical 
framework of variance based sensitivity analysis~\cite{Sobol:2001, Saltelli:2010} 
assumes that the inputs
are statistically independent. While extensions to the cases 
of correlated inputs exist~\cite{Borgonovo:2007,Li:2010,Jacques:2006,Xu:2007},
 we limit the discussion in this section to the
case of random inputs that are statistically independent and are 
either uniformly distributed or
distributed according to the Boltzmann probability distribution.
Note that a measure $\mu$
on $\R$ is referred to as a Boltzmann measure if it is 
absolutely continuous with respect to the Lebesgue measure  
and admits a density  of the form $\pi(x) = C \exp\{-V(x)\}$,
where $V$ is a continuous function and $C$ a normalization 
constant~\cite{Lamboni:2013}. 
An important class of Boltzmann distributions are the so called log-concave
distributions, which include Normal, Exponential, Beta, Gamma, Gumbel, and
Weibull distributions. Note also that the uniform distribution does not fall under
the class of Boltzmanm distributions~\cite{Lamboni:2013}.

The total-effect Sobol' index ($T_i(f)$) of a model output, $f(\vec\xi)$ quantifies
the total contribution of the input, $\xi_i$ to the variance of the
output~\cite{Sobol:2001}. Mathematically, this can be expressed as follows:
\be
T_i(f) = 1 - 
\frac{\V_{\vec{\xi}_{\sim i}}\big[\mathbb{E}[f|\vec{\xi}_{\sim i}]\big]}{\V(f)},
\label{eq:total}
\ee
where $\vec{\xi}_{\sim i}$ is the input parameter vector with the  
$i^\text{th}$ entry removed. 
Here $\mathbb{E}[f|\vec{\xi}_{\sim i}]$ denotes the conditional
expectation of $f$ given $\vec{\xi}_{\sim i}$
and its variance is computed with respect to $\vec{\xi}_{\sim i}$.
The quantity, $\V(f)$ denotes the total variance of the model output.
 The total-effect Sobol' index accounts
for the contribution of a given input to the variability in the output by itself
as well as due to its interaction or coupling with other inputs. 
Determining accurate estimates of $T_i(f)$ typically involves a large 
number of 
model runs and is therefore can be prohibitive in the case of
compute-intensive applications. Derivative based 
global sensitivity measures (DGSMs)~\cite{Sobol:2009} provide a means for
approximating informative upper bounds on $T_i(f)$ at a lower cost; see 
also~\cite{Vohra:2018}. 

For $f: \Omega \to \R$, we consider the DGSMs,
\[
    \nu_i(f) := \E{\left(\frac{\partial f}{\partial\xi_i}\right)^2} =
                  \int_\Omega 
                  \left(\frac{\partial f}{\partial\xi_i}\right)^2
                  \pi(\vec{\xi})d\vec{\xi}, \quad i = 1, \ldots, \Np.   
\]
Here $\pi$ is the joint PDF of $\vec\xi$. Note that $\nu_i(f)$ is the 
$i^{\text{th}}$ diagonal element of the matrix $\mat{C}$ as defined in~\eqref{eq:C}. 
Consider the spectral decomposition written 
as $\mat{C} = \sum_{k=1}^\Np \lambda_k \vec{w}_k \vec{w}_k^\top$. Herein, we use the notation 
$\ip{\cdot}{\cdot}$ for the Euclidean inner product.
% where
%$\lambda_i$ are the (non-negative) eigenvalues of $\mat{C}$, in descending
%order, and
%$\vec{w}_k$ are the corresponding (orthonormal) eigenvectors. 
The following
result provides a representation of DGSMs in terms of the 
spectral representation of $\mat{C}$: 
\begin{lemma}
We have
$\nu_i(f) = \sum_{k=1}^\Np \lambda_k \ip{\vec{e}_i}{\vec{w}_k}^2$.
\end{lemma}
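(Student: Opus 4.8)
The plan is short and essentially computational: I would recognize the DGSM $\nu_i(f)$ as the $(i,i)$ entry of $\mat{C}$, rewrite that entry as a quadratic form sandwiched between standard basis vectors, and then substitute the given spectral decomposition.

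First I would make explicit the identification already noted in the text. By the definition of $\mat{C}$ in~\eqref{eq:C}, its $(i,i)$ entry is
\[
\mat{C}_{ii} = \int_\Omega \left(\frac{\partial f}{\partial\xi_i}\right)^2 \mu(d\vec\xi) = \E{\left(\frac{\partial f}{\partial\xi_i}\right)^2} = \nu_i(f),
\]
so that $\nu_i(f) = \ip{\vec{e}_i}{\mat{C}\vec{e}_i} = \vec{e}_i^\top \mat{C}\, \vec{e}_i$, since extracting the $(i,i)$ entry of a matrix is exactly the operation of sandwiching it between the $i$-th standard basis vector on both sides.

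Next I would substitute the spectral decomposition $\mat{C} = \sum_{k=1}^\Np \lambda_k \vec{w}_k\vec{w}_k^\top$ into this quadratic form and expand:
\[
\nu_i(f) = \vec{e}_i^\top\left(\sum_{k=1}^\Np \lambda_k\vec{w}_k\vec{w}_k^\top\right)\vec{e}_i = \sum_{k=1}^\Np \lambda_k (\vec{e}_i^\top\vec{w}_k)(\vec{w}_k^\top\vec{e}_i) = \sum_{k=1}^\Np \lambda_k \ip{\vec{e}_i}{\vec{w}_k}^2,
\]
where the final equality uses the symmetry $\vec{e}_i^\top\vec{w}_k = \vec{w}_k^\top\vec{e}_i = \ip{\vec{e}_i}{\vec{w}_k}$ of the Euclidean inner product. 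This is precisely the claimed identity, and the sum being finite (it runs over the $\Np$ eigenpairs) means no limiting or interchange argument is needed.

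I do not anticipate a genuine obstacle here, as every step is elementary linear algebra. The only point requiring a moment of care is the opening identification, namely that the $i$-th DGSM coincides \emph{exactly} with the $(i,i)$ entry of $\mat{C}$; this relies on the square-integrability assumptions on $f$ and its partial derivatives stated in Section~\ref{sub:ac}, which guarantee that each diagonal integral is finite so that $\mat{C}$, and hence its spectral decomposition, is well defined.
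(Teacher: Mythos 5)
Your proof is correct and follows exactly the paper's own argument: identify $\nu_i(f)$ as the diagonal entry $\vec{e}_i^\top \mat{C}\, \vec{e}_i$, substitute the spectral decomposition $\mat{C} = \sum_{k=1}^{N_\text{p}} \lambda_k \vec{w}_k\vec{w}_k^\top$, and expand. The only difference is that you spell out the justification of the opening identification in slightly more detail, which is fine.
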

\begin{proof}
Note that $\nu_i(f) = \vec{e}_i^\top \mat{C} \vec{e}_i$,  
where $\vec{e}_i$ is the $i$th coordinate vector in $\R^\Np$, $i = 1, \ldots, \Np$.
Therefore,
$\nu_i(f) = \vec{e}_i^T \Big(\sum_{k=1}^\Np \lambda_k \vec{w}_k \vec{w}_k^\top\Big) \vec{e}_i
 = \sum_{k=1}^\Np \lambda_k \ip{\vec{e}_i}{\vec{w}_k}^2$. 
\end{proof}
In the case where the eigenvalues decay rapidly to zero, we can obtain
accurate approximations of $\nu_i(f)$ by truncating the summation: 
\[
   \act{f}{i}{r} =  \sum_{k=1}^r \lambda_k \ip{\vec{e}_i}{\vec{w}_k}^2,
   \quad i = 1, \ldots, \Np, \quad r \leq \Np.
\]
The quantities $\act{f}{i}{r}$ are called activity scores
in~\cite{Diaz:2016,Constantine:2017}, where links between GSA measures and
active subspaces is explored.
%The activity scores connect ideas from active subspaces and global sensitivity
%analysis, and can be used to approximate DGSMs.  
The following result, which
can also be found in~\cite{Diaz:2016,Constantine:2017}, quantifies the error in this
approximation. We provide a short proof for completeness. 
\begin{proposition}\label{prp:dgsm_bound} 
For $1 \leq r < \Np$,
\[
0 \leq \nu_i(f) - \act{f}{i}{r} \leq \lambda_{r+1}, \quad i = 1, \ldots, \Np.
\] 
\end{proposition}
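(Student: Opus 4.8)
The plan is to start from the spectral representation of the DGSM furnished by the preceding Lemma, namely $\nu_i(f) = \sum_{k=1}^\Np \lambda_k \ip{\vec{e}_i}{\vec{w}_k}^2$, and to recognize that the activity score $\act{f}{i}{r} = \sum_{k=1}^r \lambda_k \ip{\vec{e}_i}{\vec{w}_k}^2$ is simply the truncation of this sum to its first $r$ terms. Subtracting, the quantity to be bounded is the tail sum
\[
\nu_i(f) - \act{f}{i}{r} = \sum_{k=r+1}^\Np \lambda_k \ip{\vec{e}_i}{\vec{w}_k}^2.
\]
Both bounds will be read off directly from this expression.

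For the lower bound, I would observe that every summand is nonnegative: each $\lambda_k \geq 0$ because $\mat{C}$ is positive semidefinite, and each $\ip{\vec{e}_i}{\vec{w}_k}^2$ is a square. Hence the tail sum is nonnegative, giving $\nu_i(f) - \act{f}{i}{r} \geq 0$. For the upper bound, I would exploit that the eigenvalues are sorted in descending order, so $\lambda_k \leq \lambda_{r+1}$ for every $k \geq r+1$. Pulling this factor out yields
\[
\sum_{k=r+1}^\Np \lambda_k \ip{\vec{e}_i}{\vec{w}_k}^2
\leq \lambda_{r+1} \sum_{k=r+1}^\Np \ip{\vec{e}_i}{\vec{w}_k}^2.
\]
The remaining sum is controlled by the orthonormality of the eigenvectors: since $\{\vec{w}_k\}_{k=1}^\Np$ is an orthonormal basis of $\R^\Np$, Parseval's identity gives $\sum_{k=1}^\Np \ip{\vec{e}_i}{\vec{w}_k}^2 = \|\vec{e}_i\|^2 = 1$, and therefore the partial (tail) sum satisfies $\sum_{k=r+1}^\Np \ip{\vec{e}_i}{\vec{w}_k}^2 \leq 1$. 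Combining the two displays delivers $\nu_i(f) - \act{f}{i}{r} \leq \lambda_{r+1}$.

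There is no substantive obstacle here; the argument is essentially a one-line consequence of the Lemma together with two standard facts. If anything deserves care, it is the upper-bound step: one must keep separate the monotonicity of the eigenvalues (which supplies the factor $\lambda_{r+1}$) and the Parseval/completeness identity (which caps the weight $\sum_{k>r}\ip{\vec{e}_i}{\vec{w}_k}^2$ by $1$). Both rely only on $\mat{C}$ being symmetric positive semidefinite with an orthonormal eigenbasis, which is guaranteed by the spectral decomposition stated earlier, so the proof is complete once these observations are assembled.
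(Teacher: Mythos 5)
Your argument is correct and is essentially identical to the paper's own proof: both identify the difference $\nu_i(f) - \act{f}{i}{r}$ with the tail sum $\sum_{k=r+1}^{\Np} \lambda_k \ip{\vec{e}_i}{\vec{w}_k}^2$, obtain nonnegativity from $\lambda_k \geq 0$, and get the upper bound by factoring out $\lambda_{r+1}$ and invoking $\sum_{k=r+1}^{\Np} \ip{\vec{e}_i}{\vec{w}_k}^2 \leq \|\vec{e}_i\|_2^2 = 1$. No gaps; nothing further to add.
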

\begin{proof} 

%Using the spectral representation of the DGSMs and the definition of activity
%scores we clearly see:
%\[
%\alpha_i(f;r) = \sum_{k=1}^{r}\lambda_k \langle \vec{e}_i, \vec{w}_k \rangle^2 \leq \sum_{k=1}^{N_p}\lambda_k \langle \vec{e}_i, \vec{w}_k \rangle^2 = \nu_i(f), \quad \quad i = 1,\ldots,N_p, \quad r \leq N_p
%\]
%In other words
%\[
%0 \leq  \sum_{k=1}^{N_p}\lambda_k \langle \vec{e}_i, \vec{w}_k \rangle^2 - \sum_{k=1}^{r}\lambda_k \langle \vec{e}_i, \vec{w}_k \rangle^2,  \quad \quad i = 1,\ldots,N_p
%\]
%with equality if $N_p=r$.
%\newline
%We can write:
%\[
%\begin{aligned}
%\nu_i(f) = \sum_{k=1}^{N_p}\lambda_k \langle \vec{e}_i, \vec{w}_k \rangle^2 = \sum_{k=1}^{r}\lambda_k \langle \vec{e}_i, \vec{w}_k \rangle^2 + \sum_{k=r+1}^{N_p}\lambda_k \langle \vec{e}_i, \vec{w}_k \rangle^2  \\
%= \alpha_i(f;r) + \sum_{k=r+1}^{N_p}\lambda_k \langle \vec{e}_i, \vec{w}_k \rangle^2 \leq \alpha_i(f;r) + \lambda_{r+1} \sum_{k=r+1}^{N_p} \langle \vec{e}_i, \vec{w}_k \rangle^2,  \quad \quad i = 1,\ldots,N_p
%\end{aligned}
%\]
%The eigenvectors $\vec{w}_k$ are orthonormal so they all have length 1. Also note that for every $x \in \R^n$ we have \[ \norm{x}^2 = \sum_{k=1}^{n} \langle \vec{x}, \vec{w}_k \rangle^2\] This is known as Parseval's identity.
%In particular in this case $\vec{x} = \vec{e}_i \in \R^{N_p}$ so
%\[1 = \norm{\vec{e}_i}^2 = \sum_{k=r+1}^{N_p} \langle \vec{e}_i, \vec{w}_k \rangle^2\]
%Finally we write:
%\[
%\nu_i(f) \leq \alpha_i(f;r) + \lambda_{r+1}, \quad \quad i = 1,\ldots,N_p
%\]

Note that, $\nu_i(f) - \act{f}{i}{r}= \sum_{k=r+1}^\Np \lambda_k \ip{\vec{e}_i}{\vec{w}_k}^2 \geq 0$,
which gives the first inequality. To see the upper bound, we note,
\[
   \sum_{k=r+1}^\Np \lambda_k \ip{\vec{e}_i}{\vec{w}_k}^2 \leq \lambda_{r+1} \sum_{k=r+1}^\Np \ip{\vec{e}_i}{\vec{w}_k}^2
   \leq \lambda_{r+1}. 
\]
The last inequality holds because 
$1 = \|\vec{e}_i\|_2^2 = 
\sum_{k = 1}^\Np \ip{\vec{e}_i}{\vec{w}_k}^2 
\geq \sum_{k=r+1}^\Np \ip{\vec{e}_i}{\vec{w}_k}^2$.
\end{proof} 
The utility of this result is realized in problems with 
high-dimensional parameters in which 
the eigenvalues $\lambda_i, i=1,\ldots,N_p$, decay rapidly to zero; in 
such cases, this result implies that  $\nu_i(f) \approx \act{f}{i}{r}$,
where $r$ is the \emph{numerical rank} of $\mat{C}$.  This will be especially
effective if there is a large gap in the eigenvalues.  

The relations recorded in the following lemma will be useful in the discussion 
that follows.
\begin{lemma}\label{lem:sum}
We have
\begin{enumerate}[label=(\alph*)]
\item $\sum_{i = 1}^\Np \act{f}{i}{r} = \sum_{k = 1}^r \lambda_k$. 
\item $\sum_{i = 1}^\Np \nu_i(f) = \sum_{k = 1}^\Np \lambda_k$. 
\end{enumerate}
\end{lemma}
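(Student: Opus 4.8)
The plan is to prove both identities by interchanging the order of the two finite summations and then invoking Parseval's identity for the orthonormal eigenbasis. Both parts rest on the same elementary observation, so I would handle them together and differ only in the range of the $k$-index.

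For part (a), I would start from the definition $\act{f}{i}{r} = \sum_{k=1}^r \lambda_k \ip{\vec{e}_i}{\vec{w}_k}^2$ and write
\[
\sum_{i=1}^\Np \act{f}{i}{r}
 = \sum_{i=1}^\Np \sum_{k=1}^r \lambda_k \ip{\vec{e}_i}{\vec{w}_k}^2
 = \sum_{k=1}^r \lambda_k \sum_{i=1}^\Np \ip{\vec{e}_i}{\vec{w}_k}^2.
\]
The interchange is justified because these are finite sums. The inner sum is then $\sum_{i=1}^\Np \ip{\vec{e}_i}{\vec{w}_k}^2 = \|\vec{w}_k\|_2^2 = 1$, since $\{\vec{e}_i\}_{i=1}^\Np$ is the standard orthonormal basis of $\R^\Np$ and each $\vec{w}_k$ is a unit eigenvector. (This is exactly the Parseval-type identity already used at the end of the proof of Proposition~\ref{prp:dgsm_bound}, read now as an equality rather than a bound.) Substituting gives $\sum_{i=1}^\Np \act{f}{i}{r} = \sum_{k=1}^r \lambda_k$.

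For part (b), I would use the representation $\nu_i(f) = \sum_{k=1}^\Np \lambda_k \ip{\vec{e}_i}{\vec{w}_k}^2$ from Lemma~2.1 and repeat the identical manipulation, now with $k$ ranging over all of $1,\ldots,\Np$, to obtain $\sum_{i=1}^\Np \nu_i(f) = \sum_{k=1}^\Np \lambda_k$. Alternatively, I could note the even shorter route that $\sum_{i=1}^\Np \nu_i(f) = \sum_{i=1}^\Np \vec{e}_i^\top \mat{C}\, \vec{e}_i = \trace(\mat{C})$, since $\nu_i(f)$ is the $i$th diagonal entry of $\mat{C}$, and $\trace(\mat{C}) = \sum_{k=1}^\Np \lambda_k$ because the trace is invariant under the orthogonal similarity $\mat{C} = \mat{W}\mat{\Lambda}\mat{W}^\top$. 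I would likely present the summation-swap version for both parts so that (b) reads as the $r=\Np$ specialization of (a), keeping the exposition uniform.

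I do not anticipate a genuine obstacle here: the result is essentially bookkeeping on the spectral decomposition. The only point requiring any care is making explicit that the orthonormality of the eigenvectors (equivalently, that $\mat{W}$ is an orthogonal matrix) is what collapses the inner sum to $1$; everything else is a routine interchange of finite sums.
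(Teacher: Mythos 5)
Your proposal is correct and follows essentially the same route as the paper: swap the two finite sums, collapse $\sum_{i=1}^{N_\text{p}} \ip{\vec{e}_i}{\vec{w}_k}^2 = \|\vec{w}_k\|^2 = 1$ by orthonormality, and obtain (b) as the $r = N_\text{p}$ case of (a) via $\nu_i(f) = \act{f}{i}{N_\text{p}}$. The trace identity you mention is a valid alternative for (b) but is not needed; the paper uses the same specialization you chose to present.
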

\begin{proof}
The first statement of the lemma holds, because
\[
\sum_{i=1}^\Np \act{f}{i}{r}    
= \sum_{i=1}^\Np \sum_{k=1}^r \lambda_k \ip{\vec{e}_i}{\vec{w}_k}^2 
= \sum_{k=1}^r \lambda_k \sum_{i=1}^\Np \ip{\vec{e}_i}{\vec{w}_k}^2 
= \sum_{k=1}^r \lambda_k \| \vec{w}_k \|^2 = \sum_{k=1}^r \lambda_k.
\]
The statement (b) follows immediately from (a), because $\nu_i(f) = \act{f}{i}{\Np}$.
%$\sum_{i=1}^\Np \nu_i(f) = \trace(\mat{C}) = \sum_{k = 1}^\Np \lambda_k$.
\end{proof}

It was shown in~\cite{Lamboni:2013} that the total-effect Sobol' 
index $T_i(f)$ can be bounded in terms of $\nu_i(f)$:
\begin{equation}\label{equ:sobol_bound}
T_i(f) \leq \frac{C_i}{\V(f)}\nu_i(f), \quad i = 1, \ldots, \Np,
\end{equation}
where for each $i$, $C_i$ is an appropriate \emph{Poincar\'{e}} constant
that depends on the distribution of $\xi_i$.
For instance, if $\xi_i$ is uniformly distributed on $[-1, 1]$, then $C_i = 4/\pi^2$; and in the 
case $\xi_i$ is normally distributed with variance $\sigma_i^2$, then $C_i = \sigma_i^2$. 
Note that~\eqref{equ:sobol_bound} for the special cases of 
uniformly distributed or normally distributed inputs was established first in~\cite{Sobol:2009}.
The bound~\eqref{equ:sobol_bound} provides a strong theoretical basis for using DGSMs to identify 
unimportant inputs. 

Combining Proposition~\ref{prp:dgsm_bound} and~\eqref{equ:sobol_bound}, shows
an interesting link between the activity scores and total-effect Sobol' indices.
Specifically, by computing the activity scores, we can identify the unimportant
inputs.  
Subsequently, 
one can attempt to reduce parameter dimension by fixing
unimportant inputs at nominal values. 

Suppose activity scores
are used to approximate DGSMs, and suppose
$\xi_i$ is
deemed unimportant as a result, due to a small activity score. 
We want to estimate
the approximation error that occurs once $\xi_i$ is fixed at a nominal value.
To formalize this process, we proceed as follows.
Let $\vec\xi$ be given and let $z$ be a nominal value for $\xi_i$.  
%we define 
%$\vec{y}^z(\vec\xi)$ as the vector with entries $y^z_j = \xi_j$ for $j \neq i$
%and $y^z_i = z$.
Consider the \emph{reduced} model, 
obtained by fixing $\xi_i$ at the nominal value: 
\[
\redf{i}{z} = f(\xi_1, \xi_2, \ldots, \xi_{i-1}, z, \xi_{i+1}, \ldots, \xi_\Np),
\] 
and consider the following relative error indicator:
\[
\mathcal{E}(z) =
\frac{ \int_\Omega \big( f(\vec\xi) - \redf{i}{z}\big)^2 \, \mu(d\vec\xi) }
          {\int_\Omega f(\vec\xi)^2 \, \mu(d\vec\xi)}.
\] 
This error indicator is a function of $z$ with $z$ distributed 
according to the distribution of $\xi_i$.
\begin{theorem}\label{thm:error_estimate}
We have $\Ez{ \mathcal{E}(z)} \leq 2C_i\big(\act{f}{i}{r} + \lambda_{r+1}\big)/{\V(f)}$, 
for $1 \leq r < \Np$.
\end{theorem}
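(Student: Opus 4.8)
The plan is to rewrite the expected relative error in terms of the total-effect Sobol' index $T_i(f)$ and then chain together the two bounds already established. First I would note that the denominator $\int_\Omega f(\vec\xi)^2\,\mu(d\vec\xi) = \E{f^2}$ is independent of $z$, so it pulls out of $\Ez{\cdot}$, leaving
\[
\Ez{\mathcal{E}(z)} = \frac{1}{\E{f^2}}\int_\Omega\!\int_\Omega \big(f(\vec\xi) - \redf{i}{z}\big)^2\,\mu(d\vec\xi)\,\mu_i(dz),
\]
where $\mu_i$ is the marginal law of $\xi_i$, which is also the distribution assigned to the nominal value $z$.

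The crux is to identify the numerator with twice the mean conditional variance of $f$ in the $\xi_i$ direction. Using independence I would factor $\mu(d\vec\xi) = \mu_i(d\xi_i)\,\mu_{\sim i}(d\vec\xi_{\sim i})$ and apply Fubini, so that for each fixed $\vec\xi_{\sim i}$ the inner integral over the independent pair $(\xi_i, z)$ takes the form $\int\!\int \big(g(\xi_i)-g(z)\big)^2\,\mu_i(d\xi_i)\,\mu_i(dz)$ with $g(\cdot) = f(\xi_1,\ldots,\cdot,\ldots,\xi_{\Np})$. Since $\xi_i$ and $z$ are i.i.d.\ under $\mu_i$, the elementary identity $\mathrm{E}[(g(X)-g(X'))^2] = 2\,\mathrm{Var}(g(X))$ for i.i.d.\ $X,X'$ shows this equals $2\,\mathrm{Var}_{\xi_i}(f\mid\vec\xi_{\sim i})$. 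Integrating over $\vec\xi_{\sim i}$ gives $2\,\mathbb{E}_{\vec\xi_{\sim i}}\!\big[\mathrm{Var}_{\xi_i}(f\mid\vec\xi_{\sim i})\big]$, and by the law of total variance combined with the definition~\eqref{eq:total} of the total-effect index this equals $2\,\V(f)\,T_i(f)$.

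With the numerator thus reduced to $2\,\V(f)\,T_i(f)$, the remainder is a direct substitution. I would apply the Poincar\'e-type bound~\eqref{equ:sobol_bound} in the form $\V(f)\,T_i(f) \leq C_i\,\nu_i(f)$, then Proposition~\ref{prp:dgsm_bound} in the form $\nu_i(f) \leq \act{f}{i}{r} + \lambda_{r+1}$, and finally use $\E{f^2} = \V(f) + (\E{f})^2 \geq \V(f)$ to replace $\E{f^2}$ by the smaller quantity $\V(f)$ in the denominator, which only weakens the inequality. Collecting these yields $\Ez{\mathcal{E}(z)} \leq 2C_i\big(\act{f}{i}{r}+\lambda_{r+1}\big)/\V(f)$.

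I expect the main obstacle to be the bookkeeping in the second paragraph: setting up the double integral over the independent pair $(\xi_i,z)$, justifying the use of Fubini, and correctly matching the resulting mean conditional variance to $T_i(f)$ through the law of total variance. Once that identification is made, the final estimate follows mechanically from the two previously established inequalities together with the harmless denominator replacement.
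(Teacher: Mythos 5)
Your proof is correct and follows essentially the same route as the paper: bound the denominator below by $\V(f)$, identify the numerator with $2\,\V(f)\,T_i(f)$, and then chain the Poincar\'e-type bound~\eqref{equ:sobol_bound} with Proposition~\ref{prp:dgsm_bound}. The only difference is that you spell out the i.i.d.\ variance identity and the law-of-total-variance step that the paper delegates to a citation of Sobol' et al., and that argument is sound.
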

\begin{proof} 
Note that, since 
$\int_\Omega f(\vec\xi)^2 \, \mu(d\vec\xi) = \V(f) + 
\left(\int_\Omega f(\vec\xi) \, \mu(d\vec\xi)\right)^2 \geq \V(f)$, we have
\[
\Ez{ \mathcal{E}(z)} \leq \frac{1}{\V(f)} \Ez{ 
\int_\Omega \big( f(\vec\xi) - \redf{i}{z}\big)^2 \, \mu(d\vec\xi)}
= 2 T_i(f), 
\]
where the equality can be shown using arguments similar to the proof of the main result 
in~\cite{SobolTarantolaGatelliEtAl07}. Using this, along with~\eqref{equ:sobol_bound} and
Proposition~\ref{prp:dgsm_bound}, we have 
\[
\Ez{ \mathcal{E}(z)} \leq 
\frac{2C_i}{\V(f)}\nu_i(f)
\leq 
\frac{2C_i}{\V(f)}\big[\act{f}{i}{r} + \lambda_{r+1}\big]. \qedhere
\]
\end{proof}

In~\cite{Vohra:2018} the screening metric
\be
   \tilde{\nu}_i(f) = \frac{C_i \nu_i(f)}{\sum_{i=1}^\Np C_i \nu_i(f)},
\label{eq:ndgsm}
\ee
was shown to be useful for detecting unimportant inputs. 
We can also bound the normalized DGSMs using activity scores as follows. It is straightforward to see  
that
\[
\tilde{\nu}_i(f) \leq 
\frac{ C_i \big(\act{f}{i}{r} + \lambda_{r+1}\big)}{\sum_{i=1}^\Np C_i \act{f}{i}{r}}
=\frac{C_i \act{f}{i}{r}}{\sum_{i=1}^\Np C_i \act{f}{i}{r}} + \kappa_i \lambda_{r+1}, 
\]
with $\kappa_i = C_i / (\sum_i C_i \act{f}{i}{r})$. 
In the case where where $\lambda_{r+1} \approx 0$, 
this motivates definition of
normalized activity scores
\[
   \actt{f}{i}{r} =  \frac{C_i \act{f}{i}{r}}{\sum_{i=1}^\Np C_i \act{f}{i}{r}}.
\] 

\begin{remark}
If the random inputs $\xi_i$, $i = 1, \ldots, \Np$, are iid, then 
the $C_i$'s in the definition of the normalized screening metric will cancel and 
\[
    \tilde{\nu}_i(f) = \frac{\nu_i(f)}{\sum_{i=1}^\Np \nu_i(f)} 
  %    = \frac{\vec{e}_i^T \mat{C} \vec{e}_i}{\trace(\mat{C})} 
      = \frac{\sum_{k=1}^\Np \lambda_k \ip{\vec{e}_i}{\vec{w}_k}^2}{\sum_{k = 1}^\Np \lambda_k}.
\]
The expression for the denominator follows from Lemma~\ref{lem:sum}(b). 
Also, in the iid case, using Lemma~\ref{lem:sum}(a) we can simplify the normalized activity scores as follows. 
\[
   \actt{f}{i}{r} =  \frac{\act{f}{i}{r}}{\sum_{i=1}^\Np \act{f}{i}{r}} = 
                     \frac{\sum_{k=1}^r \lambda_k \ip{\vec{e}_i}{\vec{w}_k}^2}
                          {\sum_{k = 1}^r \lambda_k}.
\]
\end{remark}

The significance of the developments in this section are as follows.
Theorem~\ref{thm:error_estimate} provides a theoretical basis for parameter dimension
reduction using activity scores. This is done by providing an estimate of the
error between the reduced model and the original model. If a precise ranking of
parameter importance based on total-effect Sobol' indices is desired, one
can first identify unimportant inputs by computing activity scores, and then
perform a detailed variance based GSA of the remaining model parameters. This approach will 
provide great computational savings as variance based GSA will now be performed only
for a small number of
inputs deemed important based on their activity scores. Moreover, the
presented result covers a broad class of input distributions coming from 
the Boltzmann family of distributions.
Additionally, the normalized activity scores discussed above provide practical
screening metrics that require only computing the activity scores. This is in
contrast to the bound in Theorem~\ref{thm:error_estimate} that requires the
variance $\V(f)$ of the model output.

\bigskip
\bigskip

\section{Methodology}
\label{sec:method}

In this section, we outline the methodology for computing the
active subspace in an efficient manner. The proposed framework is employed to analyze 
a 19-dimensional H$_2$/O$_2$ reaction kinetics
problem whereby the logarithm of the 
pre-exponent ($A_i$) in the rate law associated with individual reactions provided in
Table~\ref{tab:kinetics} is considered to be uniformly
distributed in the interval, $[0.97\log(A_i^\ast), 1.03\log(A_i^\ast)]$;
$A_i^\ast$ is the nominal
estimate provided in~\cite{Yetter:1991}.
Two approaches are explored for estimating the gradient of ignition delay with
respect to $\log(A_i)$: 
a perturbation  approach that involves computation of model
gradients using finite difference in order to construct the matrix $\hat{\mat{C}}$
in~\eqref{eq:chat}, and a regression  approach 
that involves a linear regression fit to the available set of
model evaluations in order to approximate the gradient. 
The active subspace is computed in an iterative manner to avoid
unnecessary model evaluations once converged is established.

%\subsection{Gradient-based approach}
%\label{sub:grad}

As discussed earlier, gradient estimation using finite differences 
requires additional model evaluations at the
neighboring points in the input domain. 
Hence, for $N$ samples in a $d$-dimensional parameter space, $N(d+1)$
model evaluations are needed. On the other hand, gradient estimation
using the regression-based approach involves a series of linear regression
fits to subsets of available evaluations as discussed 
in~\cite[Algorithm 1.2]{Constantine:2015}. Hence, the computational effort is  
reduced by a factor $(d+1)$ when using the regression-based approach.
In other words, for the same amount of computational effort, the regression
approach can afford a sample size that is (d+1) times larger than that in the
case of perturbation approach.
The specific sequence of steps
for computing the active subspace is discussed as follows.   

We begin by evaluating the gradient of the model output, $\nabla_{\bm{\xi}}f$,
at an initial set of $n_0$ samples (generated using Monte Carlo sampling)
denoted by $\bm{\xi}_i$, $i$ = $1,\ldots,n_0$.
Using the gradient
evaluations, the matrix, $\hat{\mat{C}}$ is computed. Eigenvalue decomposition
of $\hat{\mat{C}}$ yields an initial estimate of the dominant eigenspace,
$\hat{\mat{W}}_1$ and the set of corresponding eigenvalues, 
$\hat{\mat{\Lambda}}_1$. Note that $\hat{\mat{W}}_1$ is obtained by
partitioning the eigenspace around $\lambda_j$ such that the ratio of
subsequent eigenvalues,
$\left(\frac{\lambda_j}{\lambda_{j+1}}\right)\ge\mathcal{O}(10^1)$.
 At each subsequent iteration, model evaluations are
generated at a new set of $n_k$ samples. The new set of gradient evaluations
are augmented with the available set to re-construct $\hat{\mat{C}}$ followed
by its eigenvalue decomposition. The relative change in the norm of the
difference in squared value of individual components of the dominant
eigenvectors between subsequent iterations is evaluated. The process is
terminated and the resulting eigenspace is considered to have converged once
the maximum relative change at iteration $k$, $\max(\delta \hat{\mat{W}}_{1,j}^{(k)})$
($j$ is used as an index for the eigenvectors),
is smaller than a given tolerance, $\tau$.  A regression fit to
$G(\hat{\mat{W}}_1^\top\bm{\xi})$ is used as a surrogate to characterize and
quantify the uncertainty in the model output. Moreover, the components of the
eigenvectors in the active subspace are used to compute the activity scores, $\bm{\nu}_r(f)$,
which provide an insight into the relative importance of the uncertain inputs.
Note that the index, $r$, corresponds to the number of eigenvectors in
$\hat{\mat{W}}_1$. The sequence of steps as discussed are outlined
in Algorithm~\ref{alg:grad}.

%% Grad-based algorithm

\bigskip
\begin{breakablealgorithm}
\renewcommand{\algorithmicrequire}{\textbf{Input:}}
\renewcommand{\algorithmicensure}{\textbf{Output:}}
  \caption{An iterative strategy for discovering the active subspace}
  \begin{algorithmic}[1]
\Require $\theta_l$, $\theta_u$, $\beta$, $\tau$. 
\Ensure $\hat{\mat{\Lambda}}$, $\hat{\mat{W}}$, $\bm{\nu}_r(f)$ %$\eta$. %
    \Procedure{Active Subspace Computation}{}
    \State Set $k$ = 0
	\State Draw $n_k$ random samples, $\{\bm{\xi}_i\}_{i=1}^{n_k}$ 
         according to $\pi_{\bm{\xi}}$. 
    \State Set $N_\text{total}$ = $n_k$ 
	\iffalse \State Project to the physical space:
        $\{\bm{\theta}_k\}_{k=1}^{n_r}=\theta_l+0.5(\theta_u-\theta_l)\{\bm{\xi}_k\}_{k=1}^{n_r}$ \fi
	\State For each $i=1, \ldots, N_\text{total}$, compute $f(\bm{\xi}_i)$ and the gradient $\bm{g}^i = \nabla_{\bm{\xi}}f(\bm{\xi}_i)$
    \iffalse  
	\Statex\hspace{5mm} Using Finite Difference: 
	\Statex\hspace{5mm} i. Assign a small increment, $d\xi$.
	\Statex\hspace{5mm} ii. Augment the set of samples with neighboring points.
	\be \{\bm{\Xi}_k\}_{k=1}^{n_1(N_p+1)}:~\{\bm{\xi}_k\}_{k=1}^{n_1} \cup
        \{\xi_{k,j}+d\xi\}_{j=1}^{N_p} \nonumber
	\ee
	\Statex\hspace{5mm} iii. Project to the physical space:
        $\{\bm{\theta}_k\}_{k=1}^{n_1(N_p+1)}=\theta_l+0.5(\theta_u-\theta_l)\{\bm{\Xi}_k\}_{k=1}^{n_1(N_p+1)}$
	\Statex\hspace{5mm} iv. Using the augmented set, $\{\bm{\theta}_k\}_{k=1}^{n_1(N_p+1)}$
         compute $\bm{g}^i$.
         \fi 
	\State Compute $\hat{\mat{C}}$ and its eigenvalue decomposition 
		$\hat{\mat{C}}$= $\frac{1}{N_\text{total}}\sum\limits_{i=1}^{N_\text{total}}[\bm{g}^i][\bm{g}^i]^\top$ = 
		$\hat{\mat{W}}^{(k)}\hat{\mat{\Lambda}}^{(k)} \hat{\mat{W}}^{(k)\top}$
	%\State Eigenvalue decomposition, $\mat{C}$ = $W^{(0)}\Lambda^{(0)} W^{(0)\top}$%
	\State Partition: $\hat{\mat{\Lambda}}^{(k)}=
        \begin{bmatrix} \hat{\mat{\Lambda}}_1^{(k)} & \\ & \hat{\mat{\Lambda}}_2^{(k)} \end{bmatrix}$, 
        $\hat{\mat{W}}^{(k)}=\begin{bmatrix} \hat{\mat{W}}_1^{(k)} & \hat{\mat{W}}_2^{(k)} \end{bmatrix}$, 
        $\hat{\mat{\Lambda}}_1^{(k)}\in \mathbb{R}^{N_p\times r}$
	\Loop
		\State Set $k$ = $k$ + 1
		\State Draw $n_k =  \lceil\beta n_{k-1}\rceil$  new random samples 
                $\{\bm{\xi}_i\}_{i=1}^{n_k}$  $\beta\in[0,1]$
                
	%	\State Project $\bm{\xi}_k$~$\rightarrow$~$\bm{\theta}_k$.%
		\State Set $N_\text{total}$ = $N_\text{total}$ + $n_k$ 
		\State Compute $\bm{g}^i = \nabla_{\bm{\xi}_i}f(\bm{\xi}_i)$, 
             	$i=n_{k-1}+1, \ldots, n_{k-1}+n_k$.  
		\State Compute $\hat{\mat{C}}$ = 
        	$\frac{1}{N_\text{total}}\sum\limits_{k=1}^{N_\text{total}}[\bm{g}^i][\bm{g}^i]^\top$
		\State Eigenvalue decomposition, $\hat{\mat{C}}$ = $\hat{\mat{W}}^{(k)}\hat{\mat{\Lambda}}^{(k)}
		 \hat{\mat{W}}^{(k)\top}$
		\State Partition the eigenspace of $\hat{\mat{C}}$ as shown in Step 7
		\State Compute $\delta \hat{\mat{W}}_{1,j}^{(k)}$ = 
                       \scalebox{1.25}{$\frac{\|(\hat{\mat{W}}_{1,j}^{k})^2 - 
                       (\hat{\mat{W}}_{1,j}^{k-1})^2\|_2}{\|(\hat{\mat{W}}_{1,j}^{k-1})^2\|_2}$}, 
                       $j = 1,\ldots,r$.
		\If {$\max\limits_{j}\left(\delta \hat{\mat{W}}_{1,j}^{(k)}\right)<\tau$}
			\State break
		\EndIf
	\EndLoop
	\State Compute $\nu_{i,r}(f) = \sum\limits_{j=1}^{r} \lambda_j w_{i,j}^2$,
	$i=1,\ldots,N_p$.
	\State Normalize $\nu_{i,r}(f)$ as $\tilde{\nu}_{i,r}(f)$ = \scalebox{1.25}{$\frac{\nu_{i,r}(f)}{\sum_i\nu_{i,r}(f)}$}.
	
    \EndProcedure
  \end{algorithmic}
  \label{alg:grad}
\end{breakablealgorithm}
\bigskip

To assess its feasibility and suitability, we implement
Algorithm~\ref{alg:grad} to compute the active subspace for 
the 19-dimensional H$_2$/O$_2$ reaction kinetics
problem by perturbing $\log(A_i)$ by 3$\%$ about its nominal
value as discussed earlier. For the purpose of verification,
$\hat{\mat{C}}$ was initially constructed using a large set of samples
($N$~=~1000) in the input domain. The gradient was estimated using
finite difference, and hence, a total of 20,000 model runs were performed. 
In Figure~\ref{fig:eig_comp}, we illustrate the comparison of
the resulting normalized eigenvalue spectrum by plotting 
$(\lambda_i/\lambda_0)$
($i = 1,\ldots,19$) corresponding to $N$~=~1000 and the same quantity corresponding to
a much smaller set of samples, $n$~=~$\{20,40,80,120\}$.
\begin{figure}[htbp]
 \begin{center}
  \includegraphics[width=0.45\textwidth]{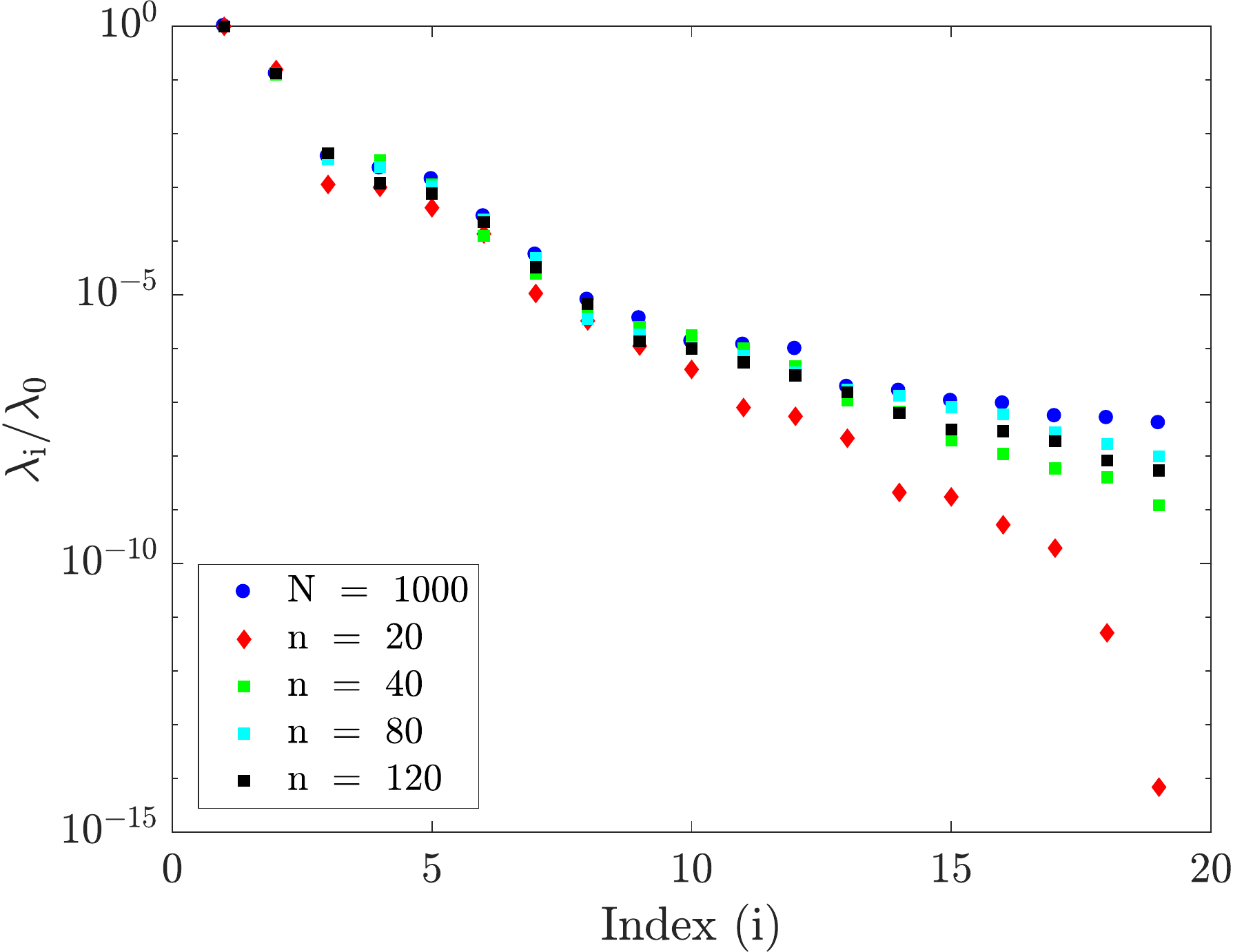}
\caption{A comparison of the normalized eigenvalue spectrum, 
($\lambda_i/\lambda_0)$
using $n$~=~$\{20,40,80,120\}$ samples with that
obtained using a much larger sample size, $N$~=~1000. 
%Right: Relative $L_2$ 
%norm of the difference between
%the square of individual components ($w_i$'s) of the dominant eigenvector evaluated using 
%$N$=1000 and $n$~=~$\{20,40,80,120\}$ samples.
} 
\label{fig:eig_comp}
\end{center}
\end{figure}
We observe that the dominant eigenvalues, $\lambda_1, \ldots, \lambda_4$, 
are approximated 
reasonably well with just 20 samples. As expected, the accuracy of higher-index
 eigenvalues is observed
to improve with the sample size. Since 
$\lambda_1$ is roughly an order of magnitude larger than $\lambda_2$, we expect 
a 1-dimensional active subspace to reasonably approximate the uncertainty
in the ignition delay. 
To further confirm this, we evaluate a relative L$^2$ norm of the difference
 ($\varepsilon_{\text{L}^{2}}^{N-n}$) between the 
squared value of corresponding components of the dominant eigenvector, computed using $N$~=~1000 ($\vec{w}_{1,N}$)
and $n$~=~$\{20,40,80,120\}$ ($\vec{w}_{1,n}$) as follows:
\be
\varepsilon_{\text{L}^{2}}^{N-n} = 
\frac{\|\vec{w}_{1,N}^2-\vec{w}_{1,n}^2\|_2}{\|\vec{w}_{1,N}^2\|_2}
\label{eq:accu}
\ee
The quantity, $\varepsilon_{\text{L}^{2}}^{N-n}$, was found to be 
$\mathcal{O}(10^{-2})$ in all cases.
Thus, even a small sample size, $n$ = 20, seems to approximate the dominant eigenspace with
reasonable accuracy in this case. 
The iterative strategy therefore offers a significant potential for computational gains. 

%\subsection{Gradient-free approach}
%\label{sub:gradfree}

The active subspace for the 19-dimensional problem was also computed using regression-based
estimates of the gradient that do not require model evaluations at neighboring points as
discussed earlier.
The quantity, $\max\limits_j(\delta \hat{\mat{W}}_{1,j}^{(r)})$ defined in Algorithm~\ref{alg:grad} was
used to assess the convergence behavior of the two approaches. Using a set tolerance, $\tau$~=~0.05,
it was observed that both perturbation and regression approaches took 8 iterations to converge. 
Note that the computational effort at each iteration was considered to be the same in both cases.
More specifically, 5 new random samples were added for the perturbation approach at each iteration.
However, as discussed earlier, a total of 100 (=5$\times$(19+1)) model runs were needed to obtain the
model prediction and its gradients at these newly generated samples. Hence, in the case of
regression, 100 new random samples were generated at each iteration since gradient computation does
not require additional model runs in this case. Thus, including the initial step, a total of 900
model runs were required to obtain a converged active subspace in both cases. 

The accuracy of the two approaches was assessed by estimating $\varepsilon_{\text{L}^{2}}^{N-n}$ 
using the components of the dominant eigenvector in the converged active subspace in each case
in~\eqref{eq:accu}. The quantity, $\varepsilon_{\text{L}^{2}}^{N-n}$ was estimated to be 0.0657
and 0.1050 using perturbation and regression respectively. Hence, the regression approach was
found to be relatively more accurate. Squared values of the individual components of the 
dominant eigenvector from the
two approaches and for the case using $N$~=~1000 in the perturbation approach are plotted in
Figure~\ref{fig:comp}~(left).
\begin{figure}[htbp]
 \begin{center}
  \includegraphics[width=0.38\textwidth]{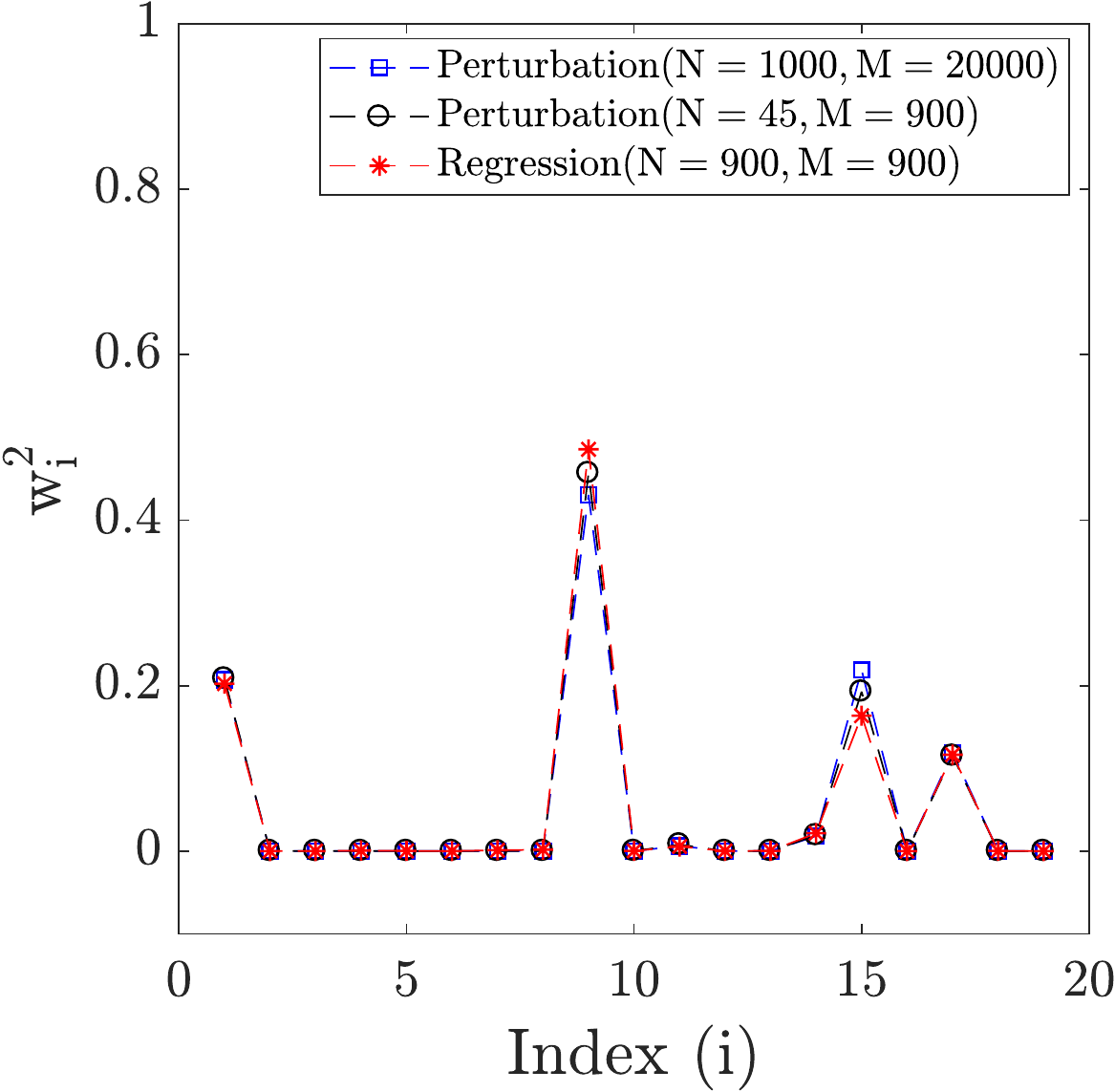}
  \includegraphics[width=0.45\textwidth]{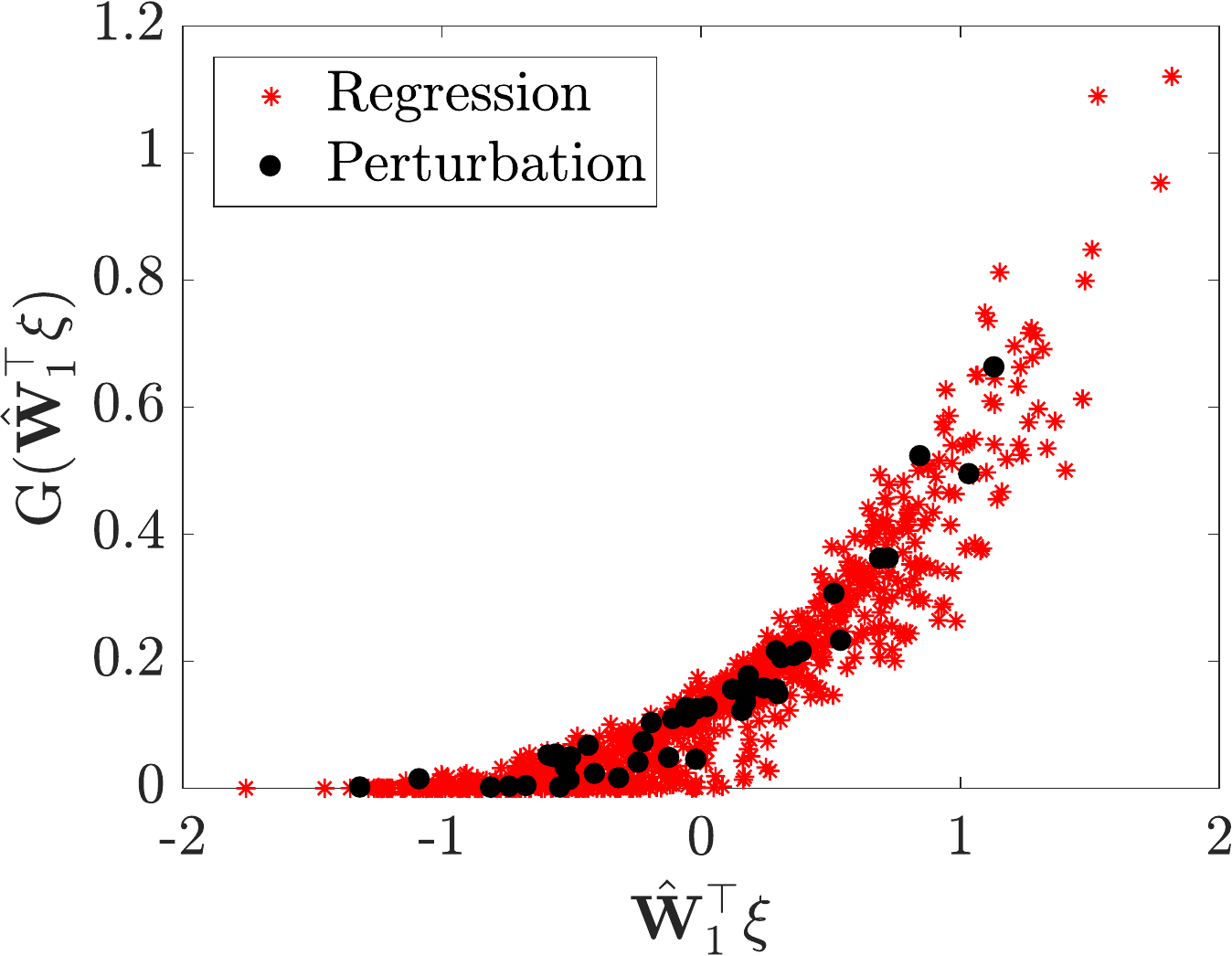}
\caption{Left: An illustrative comparison of individual squared components of the converged dominant
eigenvector obtained using perturbation and regression strategies using $M$~=~900 model runs in each case. 
Additionally, the dominant eigenvector components obtained using $M$~=~20000 model runs (corresponding to 
$N$~=~1000 samples)
in the perturbation strategy (test case), used to assess the accuracy of the two strategies are also plotted.
Right: An illustrative comparison of the SSPs generated using the
perturbation and the regression strategies for computing the active subspace. 
}
\label{fig:comp}
\end{center}
\end{figure}
The set of eigenvector components for the three cases are found to be in excellent agreement with
each other, indicating that both
approaches have sufficiently converged and are reasonably accurate for this setup.

As mentioned earlier, the model output $f(\bm{\xi})$ i.e. the ignition delay in the
H$_2$/O$_2$ reaction in this case, varies
 predominantly in a 1-dimensional active subspace. Hence, 
$f(\bm{\xi})$ can be approximated as $G(\hat{\mat{W}}_1^\top\bm{\xi})$ in
the 1-dimensional active subspace. The plot of $G$ versus $\hat{\mat{W}}_1^\top\bm{\xi}$, 
regarded as the \textit{sufficient summary plot} (SSP), obtained using the perturbation-based and
regression-based gradient estimates are
compared in Figure~\ref{fig:comp}~(right).
%
%\begin{figure}[htbp]
% \begin{center}
%  \includegraphics[width=0.45\textwidth]{./Figures/comp_ssp19D}
%\caption{\rebut{An illustrative comparison of the SSPs generated using the 
%perturbation and the regression strategies for computing the active subspace.}}
%\label{fig:comp_ssp}
%\end{center}
%\end{figure}
%
The dominant eigenvector obtained using perturbation is based on
$N$~=~45 samples which requires $M$~=~900 model runs. For the same amount of
computational effort, we can afford $N$~=~900 samples when using regression.
Hence, the SSP from regression is based on 900 points:
($\hat{\mat{W}}_1^\top\bm{\xi}_j$,~$G(\hat{\mat{W}}_1^\top\bm{\xi}_j)$),
$j = 1, \ldots, 900$. On the
other hand, the SSP from perturbation is plotted using only 45 points as
mentioned earlier.  Nevertheless, the illustrative comparison clearly indicates
that the two SSPs are in excellent agreement. Moreover, it is interesting to
note that the response in ignition delay based on the considered probability
distributions for $\log(A_i)$ although non-linear, can be approximated by a
1-dimensional active subspace.

We further estimate
the normalized activity scores for individual uncertain inputs ($\tilde{\nu}_{i,r}$; $r$=1 since a
1-dimensional active subspace seems reasonably accurate)
using the components of the dominant eigenvector as shown in Algorithm~\ref{alg:grad} (steps 21 and 22).
The activity scores for the 19 uncertain pre-exponents ($A_i$'s), estimated
using the perturbation and regression strategies are plotted in Figure~\ref{fig:comp_as}.  
\begin{figure}[htbp]
 \begin{center}
  \includegraphics[width=0.45\textwidth]{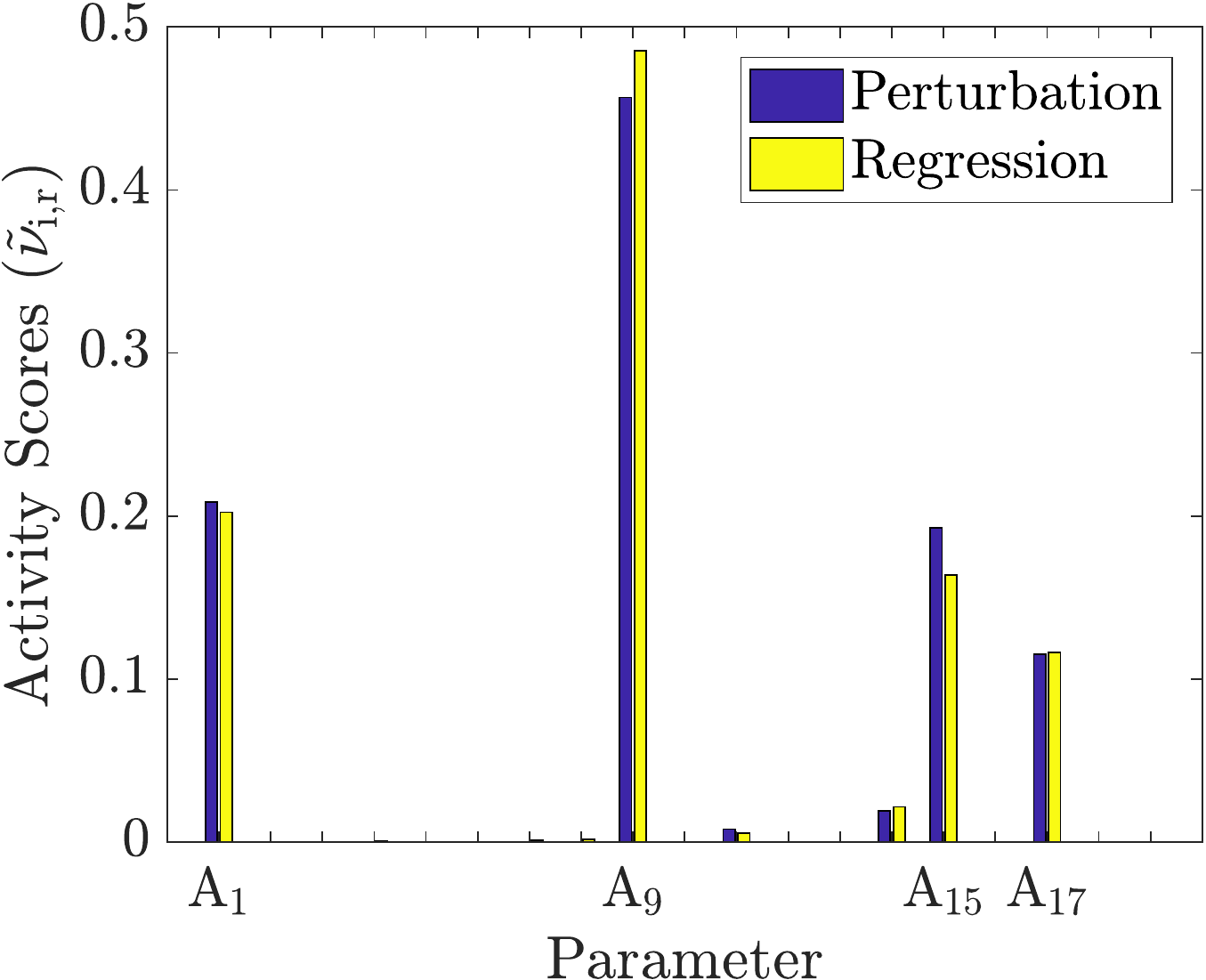}
\caption{Left: A bar-graph of normalized activity scores ($\tilde{\nu}_{i,r}$'s) 
for the 19 uncertain pre-exponents ($A_i$'s); $r$
denotes the number of eigenvectors in the dominant eigenspace.}
\label{fig:comp_as}
\end{center}
\end{figure}
The activity scores based on the two approaches for gradient estimation agree favorably 
with each other as well as those
based on the screening metric involving the DGSMs in~\cite{Vohra:2018}. It is observed that the uncertainty
associated with the ignition delay is largely due to the uncertainty in $A_9$ while $A_1$, $A_{15}$, and
$A_{17}$ are also observed to contribute significantly towards its variance.

The above comparisons indicate that the gradient of the ignition delay with respect to the
uncertain $A_i$'s is reasonably approximated using both, perturbation  and regression 
approaches in this case. Since both approaches yield consistent results and are
comparable in terms of convergence and accuracy, we could use either for the purpose
of active subspace computation for this setting.
In the following section, we shift our focus to 
the higher-dimensional H$_2$/O$_2$ reaction kinetics application wherein the activation
energies in the rate law as well as initial pressure, temperature, and stoichiometric conditions
are also considered to be uncertain.

\bigskip
\bigskip
\section{$\text{H}_2$/$\text{O}_2$ reaction kinetics: higher-dimensional case}
\label{sec:app}

For the high-dimensional case, we aim to investigate the impact of 
uncertainty in the following problem parameters on 
the ignition delay associated with the H$_2$/O$_2$ reaction:
(i) pre-exponents ($A_i$'s); (ii) the activation energies
($E_{a,i}$'s); and (iii) the initial pressure~($P_0$),
temperature~($T_0$), and stoichiometry~($\Phi_0$). 
The $\log(A_i)$'s, $E_{a,i}$'s for all reactions except $\mathcal{R}_6$
-- $\mathcal{R}_9$, $\mathcal{R}_{13}$ (due to zero nominal values for $E_a$),
and the initial conditions were considered to be uniformly distributed, and
perturbed by 2$\%$ about their nominal values.
Note that the magnitude of the perturbation was selected such that the
ignition delay assumes a physically meaningful value in the input domain. 
The nominal values of the rate parameters, $A_i$'s and $E_{a,i}$'s 
were taken from~\cite{Yetter:1991}. The nominal values of $P_0$, $T_0$, and
$\Phi_0$ were considered to be 1.0~atm, 900~K, and 2.0 respectively.

\subsection{Computing the active subspace}

The active subspace was computed using the iterative procedure outlined in 
Algorithm~\ref{alg:grad}. The convergence of the eigenvectors was examined
by tracking the quantity `$\max\limits_j(\delta \hat{\mat{W}}_{1,j}^{(i)})$'. 
In Figure~\ref{fig:conv_app}~(right), we examine $\max\limits_j(\delta \hat{\mat{W}}_{1,j}^{(i)})$
with increasing iterations for the perturbation and the regression approaches 
discussed earlier in Section~\ref{sec:method}. 
\begin{figure}[htbp]
 \begin{center}
  \includegraphics[width=0.8\textwidth]{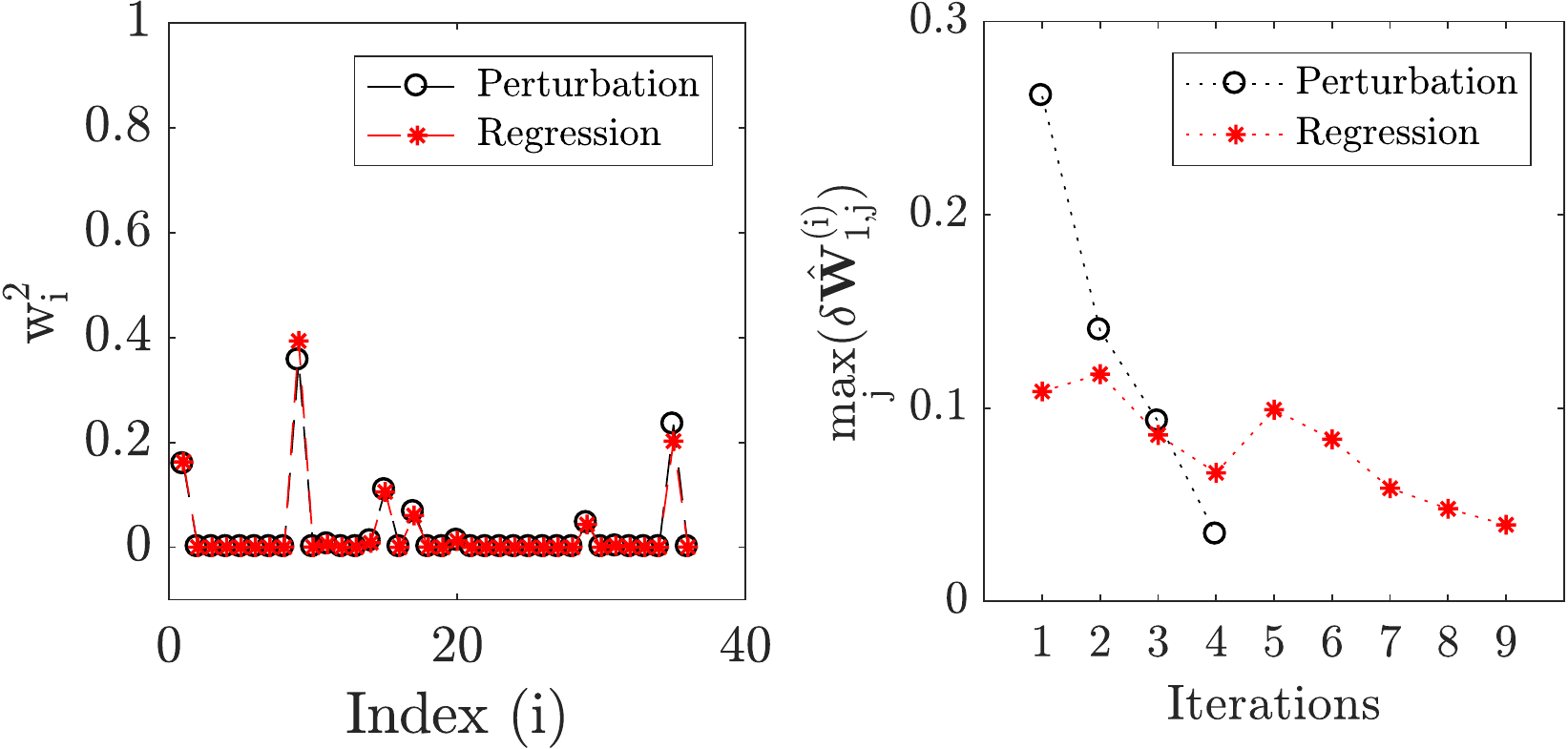}
\caption{Left: An illustrative comparison of individual components of the 
dominant eigenvector in the converged active subspace i.e., at the end
of 4 iterations in the perturbation approach and 9 iterations in the
regression approach. Right: A comparison of the convergence behavior
of the perturbation and the
regression approaches. Convergence is accomplished once 
$\max\limits_j(\delta \hat{\mat{W}}_{1,j}^{(i)})$ assumes a value smaller
than 0.05.}
\label{fig:conv_app}
\end{center}
\end{figure}
At each iteration, we improve our estimates of the matrix $\hat{\mat{C}}$ by
estimating the gradient of the ignition delay at 5 new randomly generated samples 
in the 36-dimensional input space. However, gradient computation at these
5 samples requires 185 (=5$\times$(36+1)) model runs when using perturbation. 
For the same computational effort, the regression approach can afford 185 new
samples at each iteration. It is observed that using $\tau$ = 0.05, the 
active subspace requires 4 iterations (925 model runs) to converge in the case of perturbation, and
9 iterations (1850 model runs) to converge in the case of regression. Hence, the computational effort
required to obtain a converged active subspace is doubled when using
regression to approximate the gradient. Moreover, gradient estimation in the perturbation approach
can be made more efficient by using techniques such as automatic differentiation~\cite{Kiparissides:2009}
and adjoint computation~\cite{Jameson:1988}. These techniques although not pursued here are 
promising directions for
future efforts pertaining to this work. In Figure~\ref{fig:conv_app}~(right), we compare
individual components of the dominant eigenvector in the converged active subspace
obtained using the two approaches. The components are observed to be in excellent
agreement with each other.

In Figure~\ref{fig:hd}, we plot the SSP for the perturbation approach (left) and the regression
approach (center) in a 1-dimensional active subspace. A 1-dimensional polynomial fit is also
illustrated in both cases. Moreover, the two surrogates are shown to be consistent with each other (right).
From these results, it is clear that a 1-dimensional active subspace captures the variability in the
ignition delay with reasonable accuracy, and that the two approaches yield consistent results.
\begin{figure}[htbp]
 \begin{center}
   \includegraphics[width=0.9\textwidth]{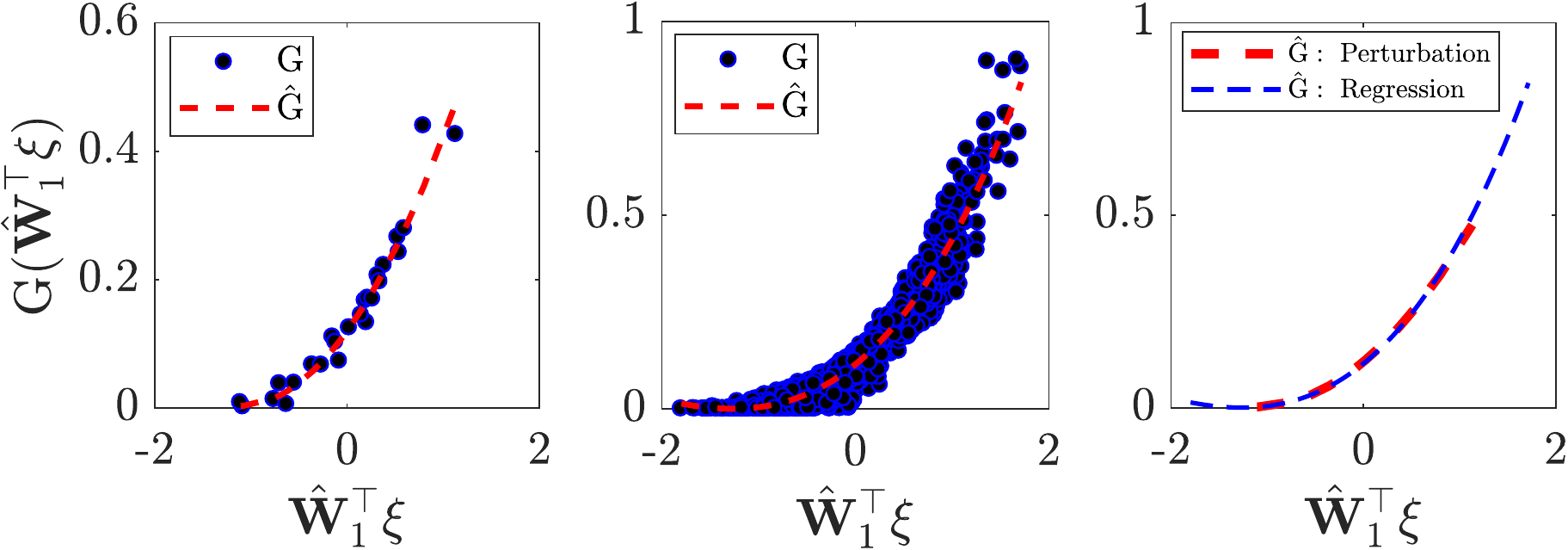}
\caption{Sufficient summary plots (SSPs) for the case of perturbation (left) and regression (center).
A polynomial fit of degree 2 and 3 as shown in the plots is used as a surrogate in the 
perturbation and regression approaches respectively. An illustrative comparison of the two
surrogates is also provided (right).}
\label{fig:hd}
\end{center}
\end{figure}

\subsection{Surrogate Assessment}
\label{sub:verify}

The 1-dimensional surrogate ($\hat{G}$) shown in Figure~\ref{fig:hd} for the perturbation
and regression approaches is investigated for its ability to capture the uncertainty in the
ignition delay. Specifically, we compare probability density functions (PDFs)
obtained using the true set of model evaluations, and 1-dimensional
surrogates ($\hat{G}$'s) based on the two approaches, as shown in 
Figure~\ref{fig:pdf_36D}. Note that the three PDFs were evaluated using the same set of 10$^4$ samples in the 
cross-validation set. 
\begin{figure}[htbp]
\begin{center}
\includegraphics[width=3.0in]{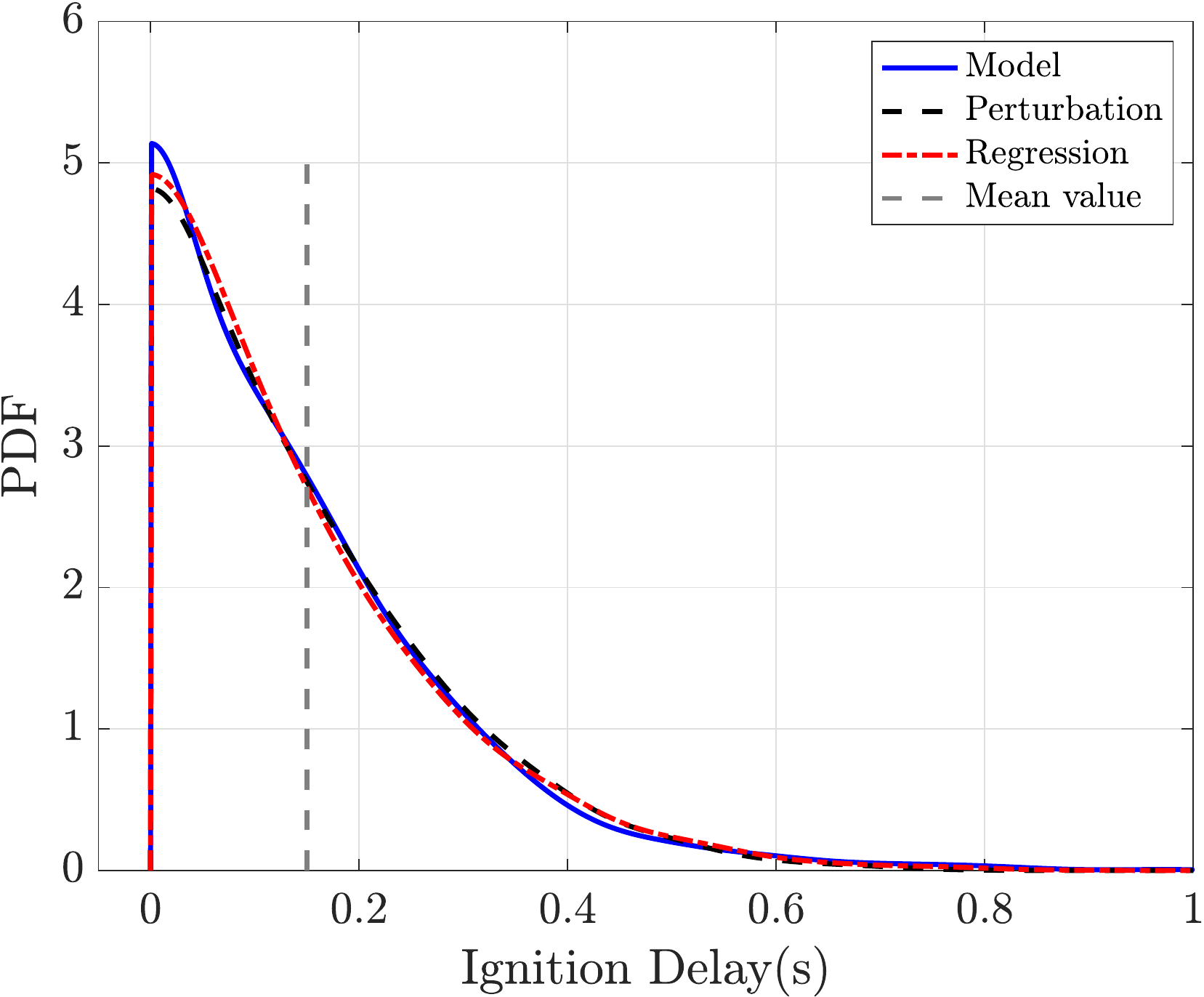}
\end{center} 
\caption{A comparison of the PDFs of ignition delay, obtained using model 
evaluations (solid line) and 1-dimensional surrogates using the regression-based strategy (dashed line) and the
perturbation-based strategy (dashed-dotted line). The same set of 10$^4$ samples in the cross-validation set were 
used in each case.}
\label{fig:pdf_36D}
\end{figure}
The PDFs are observed to be in close agreement with each other. Specifically, the modal
estimate and the uncertainty (quantified by the spread in the distributions) is found to be consistent for the three 
cases. To confirm this, we further compute the first-order (mean) and the second-order (standard deviation) 
statistics
of the estimates of the ignition delay obtained using the model, 1-dimensional surrogate from perturbation, and
1-dimensional surrogate from regression at the cross-validation sample set. The mean and standard deviation
estimates are provided in Table~\ref{tab:stats}.
\begin{table}[htbp]
\begin{center}
\begin{tabular}{ccc}
\toprule
$\textbf{Distribution}$ & $\mu$ & $\sigma$ \\ 
\bottomrule
$G$~(Model) & 0.15 & 0.14 \\
$\hat{G}$~(Perturbation-based) & 0.15 & 0.13 \\
$\hat{G}$~(Regression-based) & 0.15 & 0.13 \\
\bottomrule
\end{tabular}
\caption{The mean ($\mu$), and the standard deviation ($\sigma$), computed using the model ($G$), and
the surrogate ($\hat{G}$) based on the two strategies at 10$^4$ samples in the cross-validation
set.}
\label{tab:stats}
\end{center}
\end{table}
The mean and the standard deviation estimates obtained using the model and the 1-dimensional surrogates
are found to be in close agreement. Hence, the uncertainty in the ignition delay is accurately captured 
in both cases.

\subsection{GSA consistency check}

The normalized activity scores ($\tilde{\nu}_{i,r}$) based on the 1-dimensional active subspace, obtained
using the two approaches for gradient estimation (perturbation and regression),
are compared with the 
total-effect Sobol' indices
in Figure~\ref{fig:as_36D}. Note that the Sobol' indices were computed using the verified
1-dimensional surrogate ($\hat{G}$) in the active subspace, obtained using the 
perturbation approach. 
\begin{figure}[htbp]
 \begin{center}
  \includegraphics[width=0.8\textwidth]{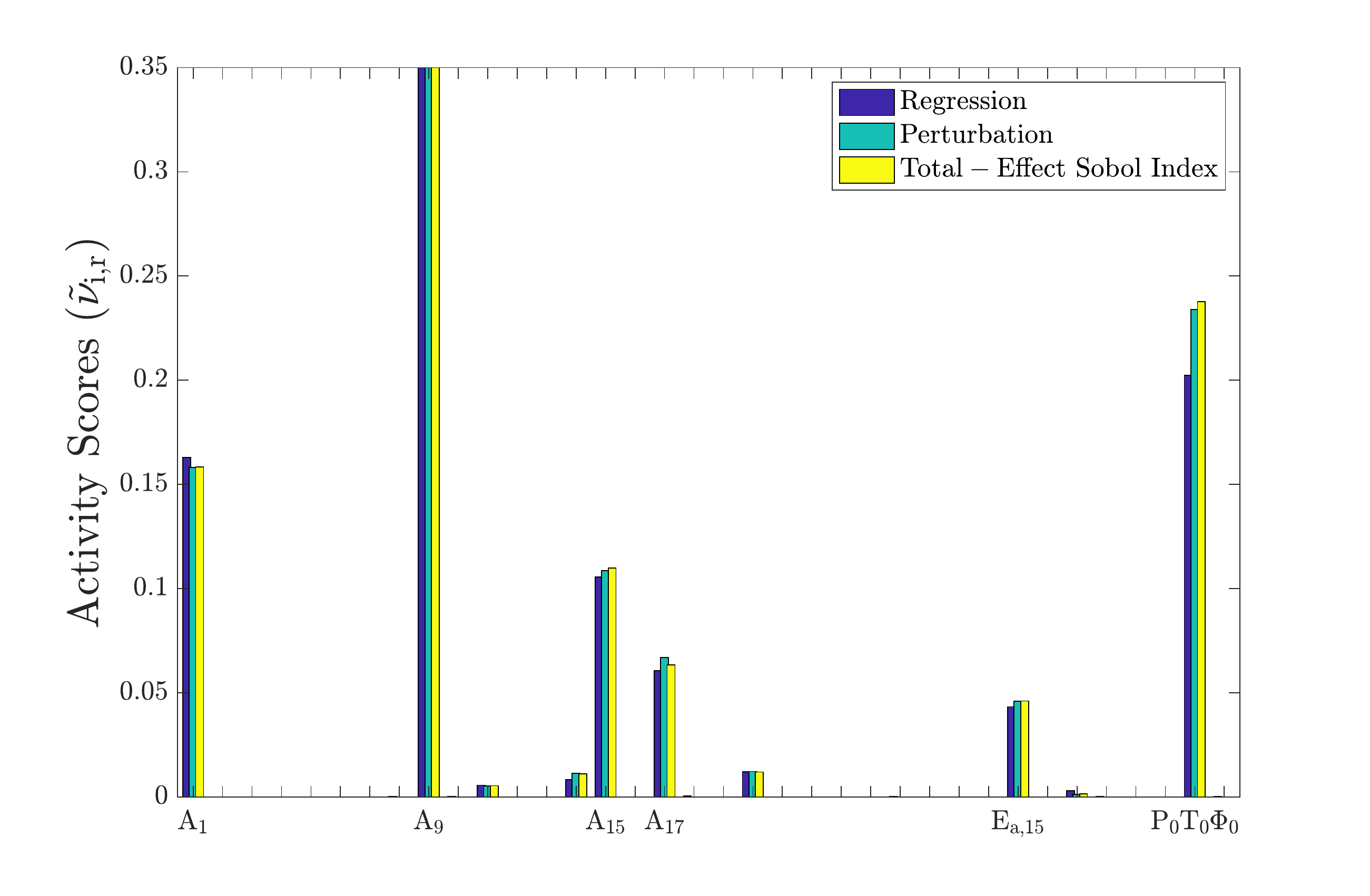}
\caption{Bar graphs illustrating individual activity scores for the uncertain 
rate parameters and the initial conditions for the H$_2$/O$_2$ reaction.}
\label{fig:as_36D}
\end{center}
\end{figure}
Several useful inferences can be drawn. Firstly, the normalized activity scores from the two approaches
and the total-effect Sobol' indices are found to be in close agreement with each other. Secondly, as expected, 
$\tilde{\nu}_{i,r}$ based on perturbation exhibits a better agreement with the total-effect Sobol' indices since
the 1-dimensional
surrogate based on the same approach was used to evaluate the Sobol' indices. This observation demonstrates
that the proposed framework is self-consistent. Thirdly, the variability in the ignition delay is predominantly due
to the uncertainty in $A_1$, $A_9$, and $T_0$ while contributions from the uncertainty in $A_{15}$, $A_{17}$, and
$E_{a,15}$, and $T_0$ are also found to be significant. The remaining rate parameters, initial pressure ($P_0$),
and  stoichiometry ($\Phi_0$) do not seem to impact the ignition delay in their considered intervals. 
Therefore, GSA has helped
identify the important rate parameters i.e. key contributors to the uncertainty, and also demonstrated that among the 
considered uncertain initial conditions, the ignition delay is mainly impacted by the perturbations in the initial 
temperature in the considered interval.

As shown in the PDF plotted in Figure~\ref{fig:pdf_36D}, the ignition delay assumes a wide range of values, i.e.
from about 2~ms to 400~ms. However, for many practical applications, a much smaller ignition delay (0.1~ms--1~ms) might
be of interest. The authors would like to point out that the proposed framework was also implemented to such a regime
by using a nominal value of the initial temperature, $T_0$ = 1000~K, and the initial pressure, $P_0$ = 1.5~atm. 
Our analysis for this
regime once again revealed that a 1-dimensional active subspace was able to capture the variability in the ignition delay due to
the uncertainty in the rate-controlling parameters and the input conditions. The sensitivity trends were also found to
be qualitatively similar to those presented in Figure~\ref{fig:as_36D}. We have not included these results in the interest
of brevity. Therefore, the proposed methodology was tested
for is robustness and applicability for a wide range of conditions pertaining to the considered application. 

\bigskip
\bigskip

\section{Summary and Discussion}
\label{sec:conc}
 
In this work, we focused on the uncertainty associated with the
rate-controlling parameters in the H$_2$/O$_2$ reaction mechanism
as well as the initial pressure, temperature, and stoichiometry and its
impact on ignition delay predictions. The mechanism involves 19 different
reactions and in each case, the reaction rate depends upon the choice of a
pre-exponent and an activation energy. Hence, in theory, the evolution of the
chemical system depends upon 38 rate parameters and three initial conditions.
However, we considered 
epistemic uncertainty in all pre-exponents and activation energies with non-zero
nominal values i.e. a total of 33 rate parameters instead of 38 in addition to
the three initial conditions.  
%Conventional means for uncertainty quantification such as those involving
%surrogate model construction as well as sensitivity analysis would thus be
%computationally challenging.
To facilitate efficient uncertainty analysis, we focused our efforts on
reducing the dimensionality of the problem by identifying important directions
in the parameter space such that the model output 
predominantly varies along these directions. These important directions
constitute the active subspace. Additionally, we demonstrated that the activity scores,
computed using the components of the dominant eigenvectors provide an efficient
means for approximating derivative based global sensitivity measures (DGSMs).
Furthermore, we established generalized mathematical linkages between the
different global sensitivity measures: activity scores, DGSMs, and total Sobol'
index which could be exploited to reduce computational effort associated with
global sensitivity analysis. 
 
Active subspace computation requires repeated evaluations of the gradient of
the QoI i.e. the ignition delay. For this purpose, we explored two approaches,
namely, perturbation and regression. Both approaches were shown to yield
consistent results for the 19-dimensional problem wherein only the
pre-exponents were considered to be uncertain. It was observed that the
computational effort required to obtain a converged active subspace was
comparable for the two approaches. However, the predictive accuracy of the
perturbation approach was found to be relatively higher. Moreover, a
1-dimensional active subspace was shown to reasonably approximate the
uncertainty in the ignition delay. Additionally, the activity scores were also
shown to be consistent with the screening metric estimates based on DGSMs
in~\cite{Vohra:2018}. An iterative procedure was adopted to enhance the
computational efficiency. 

The active subspace was further computed for a 36-dimensional problem
wherein all pre-exponents and activation energies with non-zero nominal
estimates as well as the initial conditions were considered uncertain. 
Once again, consistent results were obtained using the two approaches.
A 1-dimensional active subspace was shown to reasonably
capture the uncertainty in the ignition delay in this case. However, the
computational effort required to compute a converged active subspace
using perturbation was found to be half of the effort required in the case
of regression. Predictive accuracy of the two approaches was found to 
be comparable. Hence, perturbation seems like a preferred approach
for the higher-dimensional problem based on our findings. GSA results indicated
that the variability in the ignition delay is predominantly due to the 
uncertainty in the rate parameters, $A_1$ and $A_9$ with significant
contributions from $A_{15}$, $A_{17}$, and $E_{a,15}$. Additionally, the
ignition delay was found to be sensitive towards $T_0$.

Based on our findings, the perturbation approach is preferable for active
subspace computation; the computational cost of this approach can be reduced
significantly, if more efficient gradient computation techniques (e.g.,
adjoint-based approaches or automatic differentiation) are feasible. The
regression-based approach can be explored in situations involving intensive
simulations where gradient computation is very challenging. 

%
%and the goal is obtaining rough estimates of the statistics of the QoI as
%opposed to a more detailed analysis such as those involving parametric
%sensitivity.  
%
We also mention that alternate regression-based approaches such as ones based
on computing a global quadratic model have been proposed and used in the
literature; see e.g.,~\cite{Constantine:2017a}.  The applicability of such an
approach in the context of high-dimensional chemical reaction networks is
subject to future work. 

The computational framework presented in this work is agnostic to the choice of
the chemical system and can be easily adapted for other systems as long as the
quantity of interest is continuously differentiable in the considered domain of
the inputs.  We have demonstrated that the active subspace could be exploited
for efficient forward propagation of the uncertainty from inputs to the output.
The resulting activity scores and the low-dimensional surrogate could further
guide optimal allocation of computational resources for calibration of the
important rate-controlling parameters and input conditions in a
Bayesian setting.  Additionally, dimension reduction using active subspaces
could assist in developing robust formulations for predicting discrepancy
between simulations and measurements due to epistemic uncertainty in the model
inputs.

\bigskip
\bigskip

\section*{Acknowledgment}

M.~Vohra and S.~Mahadevan gratefully acknowledge funding support from the
National Science Foundation (Grant No. 1404823, CDSE Program), and Sandia
National Laboratories (PO No. 1643376, Technical monitor: Dr. Joshua Mullins).
The research of A.~Alexanderian was partially supported by the
National Science Foundation through the grant DMS-1745654.  The authors also
thank Dr.~Cosmin Safta at Sandia National Laboratories for his guidance
pertaining to the usage of the TChem software package.

\bigskip
\bigskip

\bibliographystyle{elsarticle-num}
\bibliography{REFER}

\end{document}